\numberwithin{equation}{section}
\theoremstyle{plain}
\newtheorem{theorem}{Theorem}[section]
\newtheorem{corollary}[theorem]{Corollary}
\newtheorem{proposition}[theorem]{Proposition}
\newtheorem{lemma}[theorem]{Lemma}
\theoremstyle{definition}
\newtheorem{definition}[theorem]{Definition}
\newtheorem*{examples}{Examples}
\theoremstyle{remark}
\newtheorem{remark}[theorem]{Remark}
\newcommand{\supp}{\mathop{\mathrm{supp}}}
\newcommand\C{\ensuremath{\mathbb{C}}}
\newcommand\N{\ensuremath{\mathbb{N}}}
\newcommand\R{\ensuremath{\mathbb{R}}}
\newcommand\eps{\varepsilon}
\newcommand{\SobSDt}[2]{^{\nabla}\|#1\|_{S_\tau,\,\eps}^{#2}}
\newcommand{\SobSDz}[2]{^{\nabla}\|#1\|_{S_\zeta,\,\eps}^{#2}}
\newcommand{\SobSdt}[2]{^{\partial}\|#1\|_{S_\tau,\,\eps\emph{}}^{#2}}
\newcommand{\SobSdz}[2]{^{\partial}\|#1\|_{S_\zeta,\,\eps\emph{}}^{#2}}
\newcommand{\SobODt}[2]{^{\nabla}\|#1\|_{\Omega_\tau,\,\eps}^{#2}}
\newcommand{\SobOdt}[2]{^{\partial}\|#1\|_{\Omega_\tau,\,\eps\emph{}}^{#2}}
\newcommand{\Cc}{\ensuremath{\mathcal{C}}}
\newcommand{\Cinfty}{\ensuremath{{\Cc}^\infty}}
\newcommand{\esm}{\ensuremath{\mathcal{E}_M}}
\newcommand{\ns}{\ensuremath{\mathcal{N}}}
\newcommand{\gs}{\ensuremath{\mathcal{G}}}
\newcommand{\al}{\alpha}
\newcommand{\pa}{\partial}
\newcommand\eep[1]{\ensuremath{e^{\eps}_{#1}}}
\newcommand\stack[2]{\genfrac{}{}{0pt}{}{#1}{#2}}
\newcommand\bsig{\ensuremath{\boldsymbol{\sigma}}}
\newcommand\bsighat{\ensuremath{\widehat{{\boldsymbol{\sigma}}}}}
\newcommand\bsigepshat{\ensuremath{\widehat{{\boldsymbol{\sigma}}_{\eps}}}}
\newcommand\bxi{\ensuremath{{\boldsymbol{\xi}}}}
\newcommand\bxihat{\ensuremath{\widehat{\boldsymbol{\xi}}}}
\newcommand\bxieps{\ensuremath{{\boldsymbol{\xi}}_{\eps}}}
\newcommand\bxiepshat{\ensuremath{\widehat{{\boldsymbol{\xi}}_{\eps}}}}
\begin{document}
\title{The wave equation on singular space-times}
\author{James~D.E.\ Grant, Eberhard Mayerhofer, Roland Steinbauer\footnote{Faculty of Mathematics, University of Vienna, Nordbergstrasse 15, 1090 Vienna, Austria, email: \{james.grant, eberhard.mayerhofer, roland.steinbauer\}@univie.ac.at; supported by FWF research grants P16742-N04 and Y237-N13}}
\date{19 February, 2008}
\maketitle
\begin{abstract}
We prove local unique solvability of the wave equation for a large
class of weakly singular, locally bounded space-time metrics in a
suitable space of generalised functions.

\medskip
\noindent {\em Keywords:} wave equation, generalised hyperbolicity,
algebras of generalised functions

\noindent {\em MSC 2000:} 83C75, 46F30, 35D05, 35Q75

\noindent {\em PACS 2008:} 04.20.Dw, 04.20.Ex,  02.30.Jr,  02.30.Sa
\end{abstract}
\thispagestyle{empty}

\section{Introduction}

The notion of a singularity in general relativity significantly
differs from that in other field theories. In the absence of a
background metric, one has to detect the presence of singularities
by showing that the space-time is \lq\lq incomplete\rq\rq\ in some
sense. In the standard approach to singularities (see, e.g., Hawking
and Ellis~\cite[Ch.\ 8]{HE}), a singularity is regarded as an
obstruction to extending geodesics. However, this definition does
not correspond very closely to ones physical intuition and
classifies many space-times that have been used to model physically
reasonable scenarios as being \lq\lq singular\rq\rq. Such \lq\lq
weakly singular\rq\rq\ space-times have long been used to describe,
for example, impulsive gravitational waves, shell-crossing
singularities and thin cosmic strings. Typically these space-times
admit a metric that is locally bounded but its differentiability is
below $\Cc^{1, 1}$ (i.e., the first derivative locally Lipschitz)
--- the largest differentiability class where standard differential
geometric properties, such as existence and uniqueness of geodesics,
remain valid. For a recent review on the use of metrics of low
regularity in general relativity, see~\cite{SV}.

This set of problems has stimulated considerations of whether
physical objects would be subjected to unbounded tidal forces on
approaching the singularity and was formulated mathematically in
terms of strong curvature conditions. Unfortunately, it is hard to
model the behaviour of real physical objects in a strong
gravitational field. This led Clarke~\cite{clarkeINDIEN} to suggest
that one consider the behaviour of physical fields (for which one
has a precise mathematical description) near the singularity
instead. According to this philosophy of \lq\lq generalised
hyperbolicity\rq\rq\ one should regard singularities as obstructions
to the Cauchy development of these fields rather than as an
obstruction to the extension of geodesics.

However, the weak singularities mentioned above are obstructions if
one formulates the Cauchy problem for the wave equation in the
standard theory of distributions. More precisely, there is no
generally valid distributional solution concept for the wave
equation on a space-time with a non-smooth metric. The equation,
although linear, involves coefficients of low regularity that cannot
be multiplied with the distributional solution.

To resolve this problem in the case of shell-crossing singularities,
Clarke~\cite{Clarke} introduced a specific weak solution concept
(called \emph{$\Box$-global hyperbolicity}) to prove unique
solvability of the wave equation, hence showing that these
space-times, indeed, satisfy the conditions of generalised
hyperbolicity. On the other hand, Vickers and Wilson~\cite{VW} used
the setting of Colombeau algebras~\cite{Colombeau, C} to arrive at a
valid formulation of the Cauchy problem for the wave equation on
conical space-times (modelling a thin cosmic string) and showed the
existence and uniqueness of solutions in a suitable algebra,
$\mathcal{G}$, of generalised functions. Hence they showed that
conical space-times are generalised hyperbolic or, more precisely,
$\mathcal{G}$-hyperbolic. Vickers and Wilson also showed that their
unique generalised solution corresponds to the \lq\lq
forbidden\rq\rq\ distributional solution expected on physical
grounds (via the concept of association --- see Section~\ref{geomth}
below). Their key tool is a refinement of the energy estimates for
hyperbolic PDEs (see, e.g.,~\cite[Sec.\ 7.4]{HE}, \cite[Sec.\
4.4]{ClarkeBook}), which makes them applicable in the new situation.

In this paper, we generalise this method to a much wider class of
weakly singular space-times and prove $\mathcal{G}$-hyperbolicity
for this class. Since our approach is based on regularisation of the
singular metric by sequences of smooth ones, we must put
restrictions on the growth of the sequence with respect to the
regularisation parameter $\eps$. Essentially we shall assume (see
Section~\ref{definitions2}) asymptotic local uniform boundedness
with respect to $\eps$. Recall that the space-times of interest here
typically possess a locally bounded metric. In particular, our class
includes impulsive $pp$-waves (in the Rosen form), expanding
spherical impulsive waves, and conical space-times (thereby
generalising the results of Vickers and Wilson~\cite{VW}).

\vskip .2cm
This work is organised in the following way. In
Section~\ref{prereq} we fix our notation, recall some facts on the
geometric theory of generalised functions and define our class of
weakly singular space-times (Section~\ref{definitions2}). We state
our main result in Theorem~\ref{mainthm} of
Section~\ref{mainresult}: given a point $p$ in a weakly singular
space-time, there exists a neighbourhood, $V$, of $p$ such that the
initial value problem for the wave equation admits a unique solution
in $\mathcal{G}(V)$. The proof is split into several steps:
(generalised) higher order energy integrals are introduced and
proved to be equivalent to suitable Sobolev norms in
Section~\ref{eint}. The energy estimates are provided in
Section~\ref{partb}, while some auxiliary estimates are proved in
Section~\ref{auxest}. Finally these results are collected to provide
the proof of the main theorem in Section~\ref{proof}. We end with
some concluding remarks.

\section{Prerequisites}
\label{prereq}

In this section, we give a precise definition of the class of weakly
singular metrics that we are going to consider in the sequel. Prior
to that, and for the convenience of the reader, we give a brief
summary of the geometric theory of generalised functions in the
sense of Colombeau. Our main reference for the latter is~\cite[Sec.\
3.2]{gkos} and we adopt most notational conventions from there. For
an overview of the use of these constructions in general relativity,
we refer to~\cite{SV}.

\subsection{Geometric theory of generalised functions}
\label{geomth}

The basic idea of Colombeau's approach to generalised
functions~\cite{Colombeau, C} is regularisation by sequences (nets)
of smooth functions and the use of asymptotic estimates in terms of
a regularisation parameter $\eps$. Let $M$ be a separable, smooth,
orientable, Hausdorff manifold of dimension $n$, and let ${\mathfrak
X}(M)$ denote the space of smooth vector fields on $M$. Let
$(u_\eps)_{\eps \in (0, 1]}$ with $u_\eps \in \Cc^\infty(M)$ for all
$\eps$. The (special) algebra of generalised functions on $M$ is
defined as the quotient $\gs(M) := \esm(M)/\ns(M)$ of the moderate
nets modulo the negligible nets, where the respective notions are
defined by the following asymptotic estimates:
\begin{eqnarray*}
\esm(M) &:=& \{ (u_\eps)_\eps:\ \forall K \subset\subset M, \,
\forall k \in \N_0\, \exists N \in \N \
\\
&&\forall \boldsymbol{\eta}_1, \dots, \boldsymbol{\eta}_k \in
\mathfrak{X}(M):\ \sup_{p \in K} | \mathscr{L}_{\boldsymbol{\eta}_1}
\dots \mathscr{L}_{\boldsymbol{\eta}_k} \, u_\eps(p) | =
O(\eps^{-N}) \},
\\
\ns(M) &:=& \{ (u_\eps)_\eps:\ \forall K \subset\subset M, \,
\forall k, q \in \N_0
\\
&&\forall \boldsymbol{\eta}_1, \dots, \boldsymbol{\eta}_k \in
\mathfrak{X}(M):\ \sup_{p \in K} | \mathscr{L}_{\boldsymbol{\eta}_1}
\dots \mathscr{L}_{\boldsymbol{\eta}_k}\, u_\eps(p) | = O(\eps^{q}))
\}.
\end{eqnarray*}

Elements of $\gs(M)$ are denoted by $u = [(u_\eps)_\eps] =
(u_\eps)_\eps + \ns(M)$. With component-wise operations, $\gs(M)$ is
a fine sheaf of differential algebras with respect to the Lie
derivative with respect to classical vector fields defined by
$\mathscr{L}_{\boldsymbol{\eta}} u :=
[(\mathscr{L}_{\boldsymbol{\eta}} u_\eps)_\eps]$. The spaces of
moderate resp.\ negligible sequences and hence the algebra itself
may be characterised locally, i.e., $u \in \gs(M)$ iff $u \circ
\psi_\al \in \gs(\psi_\al(V_\al))$ for all charts $(V_\al,
\psi_\al)$, where, on the open set $\psi_\al(V_\al) \subset \R^n$,
Lie derivatives are replaced by partial derivatives in the
respective estimates. Smooth functions are embedded into $\gs$
simply by the \lq\lq constant\rq\rq\ embedding $\sigma$, i.e.,
$\sigma(f) := [(f)_\eps]$. On open sets of $\R^n$, compactly
supported distributions are embedded into $\gs$ via convolution with
a mollifier $\rho \in \mathscr{S}(\R^n)$ with unit integral
satisfying $\int \rho(x) x^\alpha dx = 0$ for all $|\alpha| \geq 1$;
more precisely setting $\rho_\eps(x) = (1/\eps^n) \rho(x/\eps)$, we
have $\iota(w) := [(w * \rho_\eps)_\eps]$. In the case where
$\supp(w)$ is non-compact, one uses a sheaf-theoretical construction
which can be lifted to the manifold using a partition of unity.
{}From the explicit formula, it is clear that the embedding commutes
with partial differentiation. This embedding, however, is not
canonical since it depends on the mollifier as well as the partition
of unity. A canonical embedding of $\mathscr{D}^{\prime}$
\emph{is\/} provided by the so-called full version of the
construction (see~\cite{GlobTh}, resp.~\cite{GlobTh2} for the tensor
case). However, since we will model our weakly singular metrics in
generalised functions from the start (see Sec.~\ref{definitions1}
and the discussion at the end of Section~\ref{definitions2}
below) we have chosen to work in the so-called special setting which
is technically more accessible. Note that this is in contrast
to~\cite{VW}.

Inserting $p \in M$ into $u \in \gs(M)$ yields a well-defined
element of the ring of constants (also called generalised numbers)
$\mathcal{K}$ (corresponding to $\mathbb{K} = \R$ resp.\ $\C$),
defined as the set of moderate nets of numbers ($(r_\eps)_\eps \in
\mathbb{K}^{(0, 1]}$ with $|r_\eps| = O(\eps^{-N})$ for some $N$)
modulo negligible nets ($|r_\eps| = O(\eps^{m})$ for each $m$).
Finally, generalised functions on $M$ are characterised by their
generalised point values, i.e., by their values on points in
$\tilde{M}_c$, the space of equivalence classes of compactly
supported nets $(p_\eps)_\eps \in M^{(0, 1]}$ with respect to the
relation $p_\eps \sim p'_\eps :\Leftrightarrow d_h(p_\eps, p'_\eps)
= O(\eps^m)$ for all $m$, where $d_h$ denotes the distance on $M$
induced by any Riemannian metric.

As is evident from the definitions, all estimates are only required
to hold for $\eps$ small enough, that is there exists $\eps_0$ such
that for all $\eps < \eps_0$ the respective statement holds.
However, in order not to unnecessarily complicate our formulations
we will notationally suppress this fact most of the time.

The $\gs(M)$-module of generalised sections in vector bundles ---
especially the space of generalised tensor fields $\gs^r_s(M)$ ---
is defined along the same lines using analogous asymptotic estimates
with respect to the norm induced by any Riemannian metric on the
respective fibers. However, it is more convenient to use the
following algebraic description of generalised tensor fields
\begin{equation}
\label{tensorp} \gs^r_s(M) = \gs(M) \otimes \mathcal{T}^r_s(M)\,,
\end{equation}
where $\mathcal{T}^r_s(M)$ denotes the space of smooth tensor fields
and the tensor product is taken over the module $\Cinfty(M)$. Hence
generalised tensor fields are just given by classical ones with
generalised coefficient functions. Many concepts of classical tensor
analysis carry over to the generalised setting~\cite{KS}, in
particular Lie derivatives with respect to both classical and
generalised vector fields, Lie brackets, exterior algebra, etc.
Moreover, generalised tensor fields may also be viewed as
$\gs(M)$-multilinear maps taking generalised vector and covector
fields to generalised functions, i.e., as $\gs(M)$-modules we have
\begin{equation}
\label{mla} \gs^r_s(M) \cong L_{\gs(M)}(\gs^0_1(M)^r, \gs^1_0(M)^s;
\gs(M)).
\end{equation}

Finally, in light of the Schwartz impossibility result~\cite{Schw1},
the setting introduced above gives a minimal framework within which
tensor fields may be subjected to nonlinear operations, while
maintaining consistency with smooth geometry and allowing an
embedding of the distributional geometry as developed
in~\cite{Marsden, Parker}. Moreover, the interplay between
generalised functions and distributions is most conveniently
formalised in terms of the notion of \emph{association}. A
generalised function $u \in \gs(M)$ is called associated to zero, $u
\approx 0$, if one (hence any) representative $(u_\eps)_\eps$
converges to zero weakly. The equivalence relation $u \approx v
:\Leftrightarrow u - v \approx 0$ gives rise to a linear quotient of
$\gs$ that extends distributional equality. Moreover, we call a
distribution $w \in \mathscr{D}^{\prime}(M)$ the
\emph{distributional shadow\/} or \emph{macroscopic aspect of $u$\/}
and write $u \approx w$ if, for all compactly supported $n$-forms
$\boldsymbol{\nu}$ and one (hence any) representative
$(u_\eps)_\eps$, we have
\[
\lim_{\eps \to 0} \int\limits_M u_\eps \boldsymbol{\nu} = \langle w,
\boldsymbol{\nu} \rangle,
\]
where $\langle \, , \, \rangle$ denotes the distributional action.
By~\eqref{tensorp}, the concept of association extends to
generalised tensor fields in a natural way.

\subsection{Elements of Lorentzian geometry}
\label{definitions1}

A \emph{generalised pseudo-Riemannian metric\/} is defined to be a
symmetric, generalised $(0, 2)$-tensor field $\mathbf{g}$ with a
representative $\mathbf{g}_{\eps}$ that is a smooth
pseudo-Riemannian metric for each $\eps$ such that the determinant
$\det( \mathbf{g} )$ is invertible in the generalised sense. The
latter condition is equivalent to the following notion called
\emph{strictly nonzero on compact sets}: for any representative
$\left( \det( \mathbf{g}_{\eps}) \right)_\eps$ of $\det( \mathbf{g}
)$ we have $\forall K \subset\subset M\ \exists m \in \N:\ \inf_{p
\in K} |\det( \mathbf{g}_{\eps})| \geq \eps^m$. This notion captures
the intuitive idea of a generalised metric as a net of classical
metrics approaching a singular limit in the following precise sense:
$\mathbf{g}$ is a generalised metric iff on every relatively compact
open subset $V$ of $M$ there exists a representative $(
\mathbf{g}_{\eps})_\eps$ of $\mathbf{g}$ such that, for fixed
$\eps$, $\mathbf{g}_{\eps}$ is a classical metric and its
determinant, $\det( \mathbf{g} )$, is invertible in the generalised
sense, i.e., does not go to zero too fast as $\eps \to 0$. Note that
we work exclusively with representatives of generalised metrics that
are classical metrics for each $\eps$. If $\mathbf{g}$ is
Lorentzian, i.e., there exists a representative which is Lorentzian,
we call the pair $(M, \mathbf{g})$ a \emph{generalised space-time}.

A generalised metric induces a $\gs(M)$-linear isomorphism from
$\gs^1_0(M)$ to $\gs^0_1(M)$. The inverse of this isomorphism gives
a well-defined element of $\gs^2_0(M)$ (i.e., independent of the
representative $( \mathbf{g}_{\eps})_\eps$). This is the \lq\lq
inverse metric\rq\rq, which we denote by $\mathbf{g}^{-1}$, with
representative $\left( \mathbf{g}_{\eps}^{-1} \right)_\eps$. The
generalised covariant derivative, as well as the generalised
Riemann-, Ricci- and Einstein tensors, of a generalised metric is
defined by the usual formulae on the level of representatives. For
further details see~\cite{KS1}.

Next, we review the concept of causality in the generalised
framework. Let $\boldsymbol{\xi} \in \mathcal{G}^1_0(M)$ be a
generalised vector field on $M$. Then, by~\eqref{mla},
$\mathbf{g}(\boldsymbol{\xi}, \boldsymbol{\xi}) \in \mathcal{G}(M)$.
For functions $f \in \mathcal{G}(M)$ we have the following notion of
strict positivity:
\[
f>0\ :\Longleftrightarrow \forall K \subset\subset M, \, \exists m
\in \N:\ \inf_{p \in K} f_\eps(p) \geq \eps^m, \quad \mbox{as $\eps
\rightarrow 0$}
\]
and we define time-like, null and space-like for $\boldsymbol{\xi}$
by demanding $\mathbf{g}(\boldsymbol{\xi}, \boldsymbol{\xi}) < 0$,
$\mathbf{g}(\boldsymbol{\xi}, \boldsymbol{\xi}) = 0$, respectively
$\mathbf{g}(\boldsymbol{\xi}, \boldsymbol{\xi}) > 0$.
(See~\cite{algfound} for details, as well as for a general account
of basic Lorentzian geometry in the present setting.)

\subsection{A class of metrics}
\label{definitions2}

We are now ready to define the class of metrics that we will study.
Let $(M, \mathbf{g})$ be a generalised space-time, and
$\mathbf{g}_{\eps}$ a representative of the generalised metric. Let
$p \in M$, $U$ a relatively compact open neighbourhood of $p$, and
let $t: U \rightarrow \mathbb{R}$ be a smooth map with the
properties that $t(p) = 0$, $dt \neq 0$ on $U$. We assume that there
exists an $M_0 > 0$ with $\mathbf{g}_{\eps}^{-1}(dt, dt) \le
-1/M_0^2$, as $\eps \rightarrow 0$ on $U$. Therefore the level sets
of the function $t$, $\Sigma_{\tau} := \{ q \in U : t(q) = \tau \}$,
are space-like hyper-surfaces with respect to the representative
metrics, $\mathbf{g}_{\eps}$, uniformly as $\eps \rightarrow 0$. We
define the normal covector field to these hyper-surfaces $\bsig := -
dt \in \Omega^1(U)$ which, via the constant embedding, may also be
viewed as a generalised covector field on $U$. We define the
corresponding generalised normal vector field, $\bxi$, by its
representative $\bxieps \in \mathfrak{X}(U)$, given, for each
$\eps$, by $\bsig = \mathbf{g}_{\eps}(\bxieps, \cdot)$. We now
define the generalised function, $V$, on $U$ by its representative
$V_{\eps}: U \rightarrow \mathbb{R}^+$, given by
\[
V_{\eps}^2 = - \mathbf{g}_{\eps}(\bxieps, \bxieps).
\]
We will also require the corresponding normalised versions of the
generalised normal vector field, $\bxihat = [(\bxiepshat)_\eps] =
[(\bxieps / V_{\eps})_{\eps}]$, and covector field, $\bsighat =
[(\bsigepshat)_{\eps}] = \mathbf{g}( \bxihat, \cdot)$. Observe that,
although $\bsig$ does not depend on $\eps$, the quantities derived
from it, i.e., $\bsigepshat$, $\bxieps$ and $\bxiepshat$ necessarily
do, since we are dealing with a generalised metric. Using these
quantities, one may construct a positive-definite metric associated
with the generalised space-time (cf.~\cite[Sec.\ 4]{algfound}). In
particular, we define
\[
\mathbf{e}_{\eps} := \mathbf{g}_{\eps} + 2 \bsigepshat \otimes
\bsigepshat,
\]
which clearly, for each fixed $\eps$, is a Riemannian metric on $U$.
Additionally, the resulting class $\mathbf{e} =
[(\mathbf{e}_{\eps})_\eps]$ defines a generalised Riemannian metric
on $U$ (\cite[Prop.\ 4.3]{algfound}).

We denote by $\Sigma := \Sigma_0$ the three-dimensional space-like
hypersurface through $p$. Let $\mathbf{m}$ be a background
Riemannian metric on $U$ and denote by $\|\;\|_{\mathbf{m}}$ the
norm induced on the fibers of the respective tensor bundle on $U$.
We demand the following conditions.

\begin{enumerate}
\label{settingmetric}
\item[(A)]
\label{new-setting1} For all $K$ compact in $U$, for all orders of
derivative $k \in \mathbb N_0$ and all $k$-tuples of vector fields
$\boldsymbol{\eta}_1, \dots, \boldsymbol{\eta}_k \in \mathfrak X(U)$
and for any representative $( \mathbf{g}_{\eps} )_\eps$ we have:
\begin{itemize}
\item
$\sup_K \left\Vert \mathscr{L}_{\boldsymbol{\eta}_1} \dots
\mathscr{L}_{\boldsymbol{\eta}_k} \mathbf{g}_{\eps}
\right\Vert_{\mathbf{m}} = O(\eps^{-k})\quad\quad (\eps\rightarrow
0)$;
\item
$\sup_K \left\Vert
\mathscr{L}_{\boldsymbol{\eta}_1}\dots\mathscr{L}_{\boldsymbol{\eta}_k}
\mathbf{g}_{\eps}^{-1} \right\Vert_{\mathbf{m}} =
O(\eps^{-k})\quad\quad (\eps\rightarrow 0)$.
\end{itemize}
In particular, this implies (for $k=0$) that the metrics
$\mathbf{g}_{\eps}$ and their inverses $\mathbf{g}_{\eps}^{-1}$ are
locally uniformly bounded with respect to $\eps$.

\item[(B)]
\label{new-setting2} For all $K$ compact in $U$, we have
\begin{equation}
\sup_K \left\Vert \nabla^{\eps} \boldsymbol{\xi}_{\eps}
\right\Vert_{\mathbf{m}} = O(1), \quad\quad (\eps \rightarrow 0),
\label{conditionB}
\end{equation}
where $\nabla^{\eps}$ denotes the covariant derivative with respect
to the Lorentzian metric $\mathbf{g}_{\eps}$.

\item[(C)]
\label{new-setting3} For each representative
$(\mathbf{g}_{\eps})_\eps$ of the metric $\mathbf{g}$ on $U$,
$\Sigma$ is a past-compact space-like hypersurface such that
$\partial J_\eps^+(\Sigma)=\Sigma$. Here $J_\eps^+(\Sigma)$ denotes
the topological closure of the future emission
$D^+_\eps(\Sigma)\subset U$ of $\Sigma$ with respect to
$\mathbf{g}_{\eps}$. Moreover, there exists a nonempty open set $A
\subseteq \mathcal{M}$ and an $\eps_0$ such that
\[
A \subseteq \bigcap_{\eps < \eps_0} J_\eps^+(\Sigma).
\]
(Note that we are here following the notation of
Friedlander~\cite{FL1}. The set $D^+_\eps(\Sigma)$ would be denoted
$I^+_\eps(\Sigma)$ in the conventions of, for example, Hawking and
Ellis~\cite{HE}.)
\end{enumerate}
A generalised metric with the above properties will be referred to
as a \emph{weakly singular metric}.

\vskip .2cm
\noindent{}Some comments are in order:
\begin{itemize}
\item Condition~(A) is independent of the background Riemannian metric, $\mathbf{m}$, chosen on $U$, and may be rephrased in terms of a fixed but arbitrary coordinate system, $\{ x^a \}$, on a neighbourhood of $p$ as follows. With $k=0$, Condition~(A) states that the components of the metric $\mathbf{g}_{\eps}$ and its inverse are locally uniformly bounded on $U$. In particular, the Lorentzian norm of the generalised normal vector field $\boldsymbol{\xi}_{\eps}$ may be assumed to satisfy
\begin{equation}
\label{setting1} \frac{1}{M_0} \leq
\sqrt{-g_\eps(\boldsymbol{\xi}_{\eps}, \boldsymbol{\xi}_{\eps})} =
V_\eps \leq M_0.
\end{equation}
For $k>0$, Condition~(A) states that there exists $M_k$ such that
\[
\left| \frac{\partial^k g_{ab}^{\eps}}{\partial x^{a_1} \cdots
\partial x^{a_k}} \right| \leq \frac{M_k}{\eps^k}, \quad \left|
\frac{\partial^k g_{\eps}^{ab}}{\partial x^{a_1} \cdots \partial
x^{a_k}} \right| \leq \frac{M_k}{\eps^k}.
\]

\item Conditions~(A) and~(B) imply that the generalised Riemannian metric, $\mathbf{e}$, obeys the asymptotic condition
\[
\left\Vert \nabla^\eps{} \mathbf{e}_{\eps}^{-1}
\right\Vert_{\mathbf{m}} = O(1) \qquad \mbox{ as } \qquad \eps
\rightarrow 0.
\]
{}From condition~(A), it follows that $\left\Vert \mathbf{e}_{\eps}
\right\Vert_{\mathbf{m}}, \left\Vert \mathbf{e}_{\eps}^{-1}
\right\Vert_{\mathbf{m}} = O(1)$. (This can most easily be deduced from the
form of the metric given, below, in~\eqref{coordmetric}.)
Therefore,
by the Cauchy-Schwarz inequality for the inner product induced by $\mathbf{m}$
on the bundle of $(2, 1)$ tensors on $M$, we have
\begin{equation}
\left\Vert \nabla^\eps{} \mathbf{e}_{\eps}^{-1}
\right\Vert_{\mathbf{e}_{\eps}} = O(1) \qquad \mbox{ as } \qquad
\eps \rightarrow 0.
\label{nablae}
\end{equation}
Similarly, \lq\lq lowering the index\rq\rq\ on
$\boldsymbol{\xi}_{\eps}$ in equation~\eqref{conditionB} implies
that $\sup_K \left\Vert \nabla^{\eps} \boldsymbol{\sigma}
\right\Vert_{\mathbf{m}} = O(1)$ as $\eps \rightarrow 0$, where we
have again used the fact that $\left\Vert \mathbf{g}_{\eps}
\right\Vert_{\mathbf{m}} = O(1)$. Taking the symmetric part of
$\nabla^{\eps} \boldsymbol{\sigma}$ implies that $\left\Vert
\mathscr{L}_{\boldsymbol{\xi}_{\eps}} \mathbf{g}_{\eps}
\right\Vert_{\mathbf{m}} = O(1)$ as $\eps \rightarrow 0$. Finally,
again using the Cauchy-Schwarz inequality, we deduce that
\begin{equation}
\left\Vert \mathscr{L}_{\boldsymbol{\xi}_{\eps}} \mathbf{g}_{\eps}
\right\Vert_{\mathbf{e}_{\eps}} = O(1) \qquad \mbox{ as } \qquad
\eps \rightarrow 0. \label{nablaxi}
\end{equation}
These estimates will be required in Section~\ref{partb}.

\item Condition~(C) is necessary to ensure existence of smooth solutions on the level of representatives on a
common domain (cf.\ Step 1 in the proof of Theorem~\ref{mainthm} in
Section~\ref{proof}).
\end{itemize}

\begin{remark}
\label{ConditionCRemark}
Conditions~(A) and~(B) are given in terms of the $\eps$-asymptotics of
the generalised metric. There is, however, the following close
connection to the classical situation. Assume that we are given a space-time
metric that is locally bounded but not necessarily $\Cc^{1, 1}$ or of
Geroch-Traschen class \cite{GT} (i.e., the largest class that allows
a \lq\lq reasonable\rq\rq\ distributional treatment). We may then embed this
metric into the space of generalised metrics by convolution with a
standard mollifier (cf. Section~\ref{geomth}). {}From the explicit form of the embedding it is then
clear that condition (A) holds.

We recall that in the special version of Colombeau's
construction the embedding is non-geometric and we could -- at the price
of technical complications -- resort to the full version where a
geometric embedding is available (as was done in~\cite{VW}).
Nevertheless, in the full construction generalised functions that are
embeddings of locally bounded functions still display the
$\eps$-asymptotics of condition~(A). Moreover our approach using the
special version offers more flexibility: Whenever one succeeds, e.g.\ by
using some physically motivated procedure, to model a singular metric by
a sequence of classical metrics obeying~(A)-(C), then our results apply.

Condition~(B) on the other hand demands somewhat better asymptotics of
the derivatives of the $(0, 0)$-component of the metric in adapted
coordinates (see also~\eqref{coeffestimates} below). This is a technical
condition that is satisfied by several relevant examples (see below).

As to condition~(C), the only part that exceeds the classical condition for existence and uniqueness of solutions
is the existence of the non-empty open set $A$. Geometrically, this means that the light-cones of the metric $\mathbf{g}_{\eps}$
do not collapse as $\eps \rightarrow 0$. In terms of regularisations of classical metrics, this condition will always be satisfied
if the classical metric is non-degenerate.
\end{remark}

\begin{examples}
To begin with we discuss the conical space-times of~\cite{VW}. They fall into our class since estimates~(6) and~(7) in~\cite{VW} for the embedded metric imply our condition~(A), while~(B) is immediate from the staticity of the metric.

\vskip .2cm
The metric of impulsive $pp$-waves (in \lq\lq Rosen form\rq\rq) fall into
our class. For simplicity we only consider plane waves of constant linear polarisation, i.e,
\[
 - du dv + \left( 1 + u_+ \right)^2 dx^2 + \left( 1 - u_+ \right)^2 dy^2,
\]
where $u_+ := u H(u)$ denotes the kink function. This metric is
locally bounded (actually continuous) and, since the non-trivial
behaviour involves simply the spatial part of the metric, will
therefore obey Conditions~(A) and (B) when embedded with a
standard mollifier, or -- more generally -- if we use any other
regularisation that converges at least locally uniformly to the
original metric.

\vskip .2cm
Similarly, in~\cite{PG}, metrics for expanding spherical impulsive
waves of the form
\[
2 du dv + 2 v^2 \left| dz + \frac{u_+}{2v} \overline{H}
d\overline{z} \right|^2
\]
were studied, where $H(z)$ is the Schwartzian derivative of any
arbitrary analytic function $h(z)$. Again, this metric is continuous
and the non-trivial behaviour occurs in only the spatial directions.
So we obtain conditions~(A) and~(B) as for the above case.

\vskip .2cm
In all of these examples, the discussion at the end of Remark~\ref{ConditionCRemark} imply that Condition~(C) is also satisfied.
\end{examples}

\section{The main result}
\label{mainresult}

We are interested in the initial value problem for the wave equation
\begin{eqnarray}
\Box u &=& 0 \nonumber
\\
\label{weqofsetting} u|_\Sigma &=& v
\\
\nonumber \mathscr{L}_{\widehat{\boldsymbol{\xi}}} u|_\Sigma &=& w
\end{eqnarray}
on the subset $U$ of a weakly singular space-time $(M, \mathbf{g})$
(i.e., $\mathbf{g}$ subject to the assumptions (A)--(C) of
Section~\ref{definitions2}). Here $\Sigma := \Sigma_0$ denotes the
level set $\{ q \in U: t(q) = 0 \}$ of the function $t: U
\rightarrow \mathbb{R}$ introduced in Section~\ref{definitions2}.
The initial conditions are defined by $v$ and $w$, which are given
functions in $\mathcal{G}(\Sigma)$. Note that this, in particular,
includes the case of arbitrary distributional initial data. We are
interested in finding a local solution $u \in \mathcal{G}$ on $U$
resp.\ an open subset thereof.

A general strategy to solve PDEs in $\mathcal{G}$ is the following.
First, solve the equation for fixed $\eps$ in the smooth setting and
form the net $(u_\eps)_\eps$ of smooth solutions. This will be a
candidate for a solution in $\mathcal{G}$, but particular care has
to be taken to guarantee that the $u_\eps$ share a common domain of
definition. In the second step, one shows that the solution
candidate $(u_\eps)_\eps$ is a moderate net, hence obtaining
existence of the solution $[(u_\eps)_\eps]$ in $\mathcal{G}$.
Finally, to obtain uniqueness of solutions, one has to prove that
changing representatives of the data or the metric leads to a
solution that is still in the class $[(u_\eps)_\eps]$. Note that
this amounts to an additional stability of the equation with respect
to negligible perturbations of the initial data and the metric.

According to this strategy, given the point $p$ in $\Sigma$ we may,
without loss of generality, assume that $(U, \{ x^a \})$, with
$(x^a)_{a = 0, 1, 2, 3} = (t, x^i)$ is a coordinate neighbourhood of
$p$, and formulate the initial value problem~\eqref{weqofsetting} in
terms of representatives on $U$. To this end, given a representative
$(\mathbf{g}_{\eps})_{\eps}$ of the metric $\mathbf{g}$, there exist
functions $h^{\eps}_{ij}, N_{\eps}^i$ on $U$ such that
\begin{equation}
\mathbf{g}_{\eps} = - V_{\eps}^2 dt^2 + h^{\eps}_{ij} \left( dx^i -
N_{\eps}^i dt \right) \otimes \left( dx^j - N_{\eps}^j dt \right).
\label{coordmetric}
\end{equation}
We further choose representatives $(v_\eps)_\eps$, $(w_\eps)_\eps$
of the data and a negligible net $(f_\eps)_\eps$ on $U$. We then
consider the initial value problem
\begin{eqnarray}
\Box^\eps u_\eps &=& f_\eps \nonumber
\\
\label{weqofsettingeps} u_\eps(t=0, x^i) &=& v_\eps(x^i)
\\
\nonumber \mathscr{L}_{\widehat{\boldsymbol{\xi}}_{\eps}} u_\eps
(t=0, x^i) &=& w_\eps(x^i),
\end{eqnarray}
where $\Box^\eps$ is the d'Alembertian derived from our particular
representative $\mathbf{g}_{\eps}$, i.e.,
\begin{align*}
\Box^\eps u_\eps &= \vert g_{\eps} \vert^{-1/2} \partial_a \left(
\vert g_{\eps} \vert^{1/2} g_{\eps}^{ab} \partial_b u_\eps \right)
\\
&= -\frac{1}{V_\eps^2} \partial_t^2 u_\eps - \frac{2}{V_{\eps}^2}
N^i \partial_t \partial_i u_{\eps} + \left( h_{\eps}^{ij} -
\frac{1}{V_{\eps}^2} N_{\eps}^i N_{\eps}^j \right) \partial_i
\partial_j u_{\eps} - g_{\eps}^{ab} \,
\Gamma[\mathbf{g}_{\eps}]^c{}_{ab} \, \frac{\partial
u_{\eps}}{\partial x^c},
\end{align*}
and $h_{\eps}^{ij}$ are the components of the inverse of
$h^{\eps}_{ij}$, $g_{\eps} := \det \mathbf{g}_{\eps}$, and
$\Gamma[\mathbf{g}_{\eps}]^c{}_{ab}$ denote the Christoffel symbols
of the metric $\mathbf{g}_{\eps}$. Note that, by Conditions~(A)
and~(B) of Section~\ref{definitions2}, the following asymptotic
estimates hold for the components of the metric in the above
coordinate system
\begin{equation}
\left.
\begin{aligned}
&V_{\eps}, h^{\eps}_{ij}, N_{\eps}^i = O(1)
\\
&\partial_a V_{\eps} = O(1)
\\
&\partial^{\alpha} V_{\eps}, \partial^{\alpha} h^{\eps}_{ij},
\partial^{\alpha} N_{\eps}^i = O({\eps}^{-|\alpha|})  \mbox{ for all
multi-indices $\alpha$ with $|\alpha| \ge 1$}
\end{aligned}
\right\} \qquad \mbox{ as $\eps \rightarrow 0$}.
\label{coeffestimates}
\end{equation}
Following the general strategy outlined above, we will prove local
unique solvability of~\eqref{weqofsetting} by showing that the
smooth solutions, $(u_\eps)_\eps$, of~\eqref{weqofsettingeps} form a
moderate net, and hence determine a class in $\mathcal{G}$, and that
this class is independent of the choice of representatives of $v$,
$w$ and $\mathbf{g}$. More precisely, our main result is the
following:

\begin{theorem}[Local existence and uniqueness of generalised solutions]
\label{mainthm} 
Let $(M, \mathbf{g})$ be a generalised space-time
and assume that Conditions~(A)--(C) of Section~\ref{definitions2}
hold. Then, for each $p \in \Sigma$, there exists an open
neighbourhood $V$ on which the initial value problem for the wave
equation~\eqref{weqofsetting} has a unique solution in
$\mathcal{G}(V)$.
\end{theorem}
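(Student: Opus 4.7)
I will follow the three-stage strategy indicated already before the statement: work at the level of representatives, establish existence on a common domain, prove moderateness of the solution net, and finally obtain uniqueness via stability under negligible perturbations of the data and of the metric.

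Fix coordinates so that~\eqref{weqofsettingeps} is the representative form of~\eqref{weqofsetting}, and pick arbitrary representatives $(\mathbf{g}_\eps)_\eps$, $(v_\eps)_\eps$, $(w_\eps)_\eps$ together with a negligible net $(f_\eps)_\eps$. For each fixed $\eps$ the coefficients of~\eqref{weqofsettingeps} are smooth and $\mathbf{g}_\eps$ is Lorentzian with $\Sigma$ space-like, so classical linear hyperbolic theory yields a unique smooth $u_\eps$ on the future Cauchy development $D^+_\eps(\Sigma)$. The non-empty open set $A \subseteq \bigcap_{\eps < \eps_0} J^+_\eps(\Sigma)$ furnished by Condition~(C) then allows me to choose a single open neighbourhood $V$ of $p$ whose closure lies inside $A$, so that all $u_\eps$ are simultaneously defined on $V$ for $\eps$ small enough.

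The main obstacle is proving moderateness of $(u_\eps)_\eps$. For each order $k$ I would introduce higher-order energy integrals $E^k_{\tau,\eps}(u_\eps)$ on the slices $\Sigma_\tau$, built from the $k$-fold covariant derivatives of $u_\eps$ contracted via the auxiliary Riemannian metric $\mathbf{e}_\eps$ from Section~\ref{definitions1}. Since $\mathbf{e}_\eps$ and $\mathbf{e}_\eps^{-1}$ are locally uniformly bounded in $\eps$ by Condition~(A), these energies should be equivalent, with constants independent of $\eps$, to the ordinary Sobolev norms $\|u_\eps(\tau,\cdot)\|_{H^k(\Sigma_\tau)}$. Differentiating $E^k_{\tau,\eps}$ in $\tau$, substituting $\Box^\eps u_\eps = f_\eps$, and handling the commutators of $\Box^\eps$ with the derivatives should produce a Gronwall-type inequality
\[
\tfrac{d}{d\tau}\, E^k_{\tau,\eps}(u_\eps) \;\le\; C_\eps\, E^k_{\tau,\eps}(u_\eps) + R^k_{\tau,\eps},
\]
where $C_\eps$ and the remainder $R^k_{\tau,\eps}$ --- involving lower-order energies and the source --- grow only like a negative power of $\eps$. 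The delicate point, and the whole reason Condition~(B) appears, lies in controlling these commutators: the pure coefficient derivatives $\partial^\alpha \mathbf{g}_\eps$ are only bounded by $\eps^{-|\alpha|}$ under Condition~(A), but the structural improvements~\eqref{nablae} and~\eqref{nablaxi} flowing from Condition~(B) absorb the most dangerous contributions coming from differentiating $\bxieps$ and $\mathbf{e}_\eps^{-1}$. Iterating Gronwall in $k$ and invoking Sobolev embedding then upgrades the $H^k$-bounds to pointwise moderate estimates for $u_\eps$ and all its derivatives on $V$, so that $(u_\eps)_\eps \in \esm(V)$.

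For uniqueness I would apply the same energy machinery to the difference $u_\eps - \tilde u_\eps$ of solutions corresponding to two admissible choices of representatives of $v$, $w$ and $\mathbf{g}$. This difference satisfies an equation of the same form as~\eqref{weqofsettingeps}, but now with negligible initial data and a negligible right-hand side: the term $(\Box^\eps - \tilde\Box^\eps)\tilde u_\eps$ splits as the product of a negligible coefficient net with a moderate net of derivatives of $\tilde u_\eps$, which is negligible by the moderateness already established in the previous paragraph. The same energy estimates then convert the negligible input into a negligible output, yielding $u_\eps - \tilde u_\eps \in \ns(V)$, which is exactly uniqueness in $\gs(V)$.
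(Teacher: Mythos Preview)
Your overall strategy is correct and matches the paper's: classical existence on a common domain via Condition~(C), moderateness of the solution net via higher-order energy estimates and Gronwall, then negligibility of differences for uniqueness. The treatment of $(\Box^\eps - \tilde\Box^\eps)\tilde u_\eps$ in the uniqueness step is also essentially the paper's Step~4.

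There is, however, a genuine gap in the moderateness step. You allow the Gronwall constant $C_\eps$ to ``grow only like a negative power of $\eps$''. That is fatal: if $C_\eps \sim \eps^{-N}$ with $N>0$, Gronwall produces a factor $e^{C_\eps \tau}\sim e^{\tau\eps^{-N}}$, which is not moderate (nor negligible when the data are). The entire point of the energy machinery in Section~\ref{partb} --- and the reason the energies $E^k_{\tau,\eps}$ are built from \emph{covariant} derivatives contracted with $\mathbf{e}_\eps$ rather than from partial derivatives --- is to arrange that the coefficient multiplying the \emph{top-order} energy $E^k$ on the right-hand side of~\eqref{energyinequalitylevelkformula} is $\eps$-independent (the constant $C_k'''$). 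Only the lower-order energies $E^j$, $j<k$, carry inverse powers of $\eps$; those enter as inhomogeneous terms already controlled by induction, so the iteration still yields moderate (resp.\ negligible) bounds. Condition~(B), through~\eqref{nablae} and~\eqref{nablaxi}, is exactly what forces the top-order constant to be $O(1)$: the contributions $T^{ab,k}_\eps\nabla^\eps_a\xi_b$ and $(\nabla^\eps_a e^{IJ}_\eps)$ in the divergence of $T^{ab,k}_\eps(u_\eps)\xi_b$ are the terms that would otherwise introduce an $\eps^{-1}$ at top order. So your sentence about Condition~(B) ``absorbing the most dangerous contributions'' is pointing at the right place, but the conclusion you draw --- moderate $C_\eps$ --- is too weak; you need $C_\eps=O(1)$.

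A smaller inaccuracy: the equivalence between $E^k_{\tau,\eps}$ and the \emph{partial-derivative} Sobolev norms is not $\eps$-uniform (Lemma~\ref{lemma1}(2) carries factors $\eps^{-2(k-j)}$); only the equivalence with the covariant-derivative norms $\SobSDt{\cdot}{k}$ is $\eps$-free. This does not break the final argument, since a polynomially-$\eps$-dependent comparison suffices once the energies themselves are moderate, but it is worth getting straight because it explains why one cannot simply run a naive $H^k$ commutator estimate and must instead use the covariant hierarchy.
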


We split the proof in a series of arguments, the core of which are
higher order energy estimates. To prepare for these, we first
introduce suitable energy tensors and energy integrals.

\section{Energy integrals}
\label{eint}

By assumption, we have a point $p \in M$ and an open neighbourhood
of $p$, $U$, and a map $t: U \rightarrow \mathbb{R}$ with $t(p) = 0$
such that $U$ is foliated by the level sets of the function $t$,
$\Sigma_{\tau} := \{ q \in U : t(q) = \tau \}, \tau \in [-\gamma,
\gamma]$, for some $\gamma > 0$. Moreover, the level sets
$\Sigma_\tau$ are space-like with respect to the generalised metric
$\mathbf{g}$. We now consider solving the forward in time initial
value problem for the wave equation on $U$, i.e., with $\tau \geq 0$
(see Figure~\ref{fig:1}).
\begin{figure}[ht!]
\begin{center}
\fbox{\includegraphics[width=5.21in]{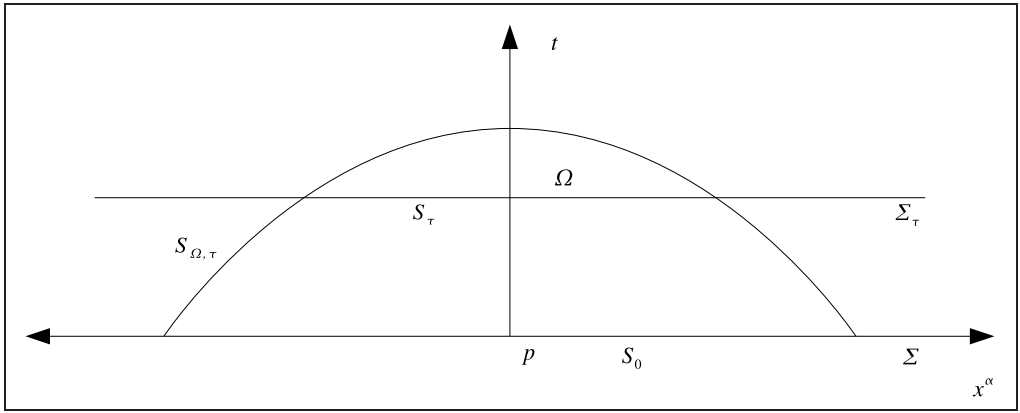}}
\end{center}
\caption{Local foliation of space-time} \label{fig:1}
\end{figure}

Given $p \in \Sigma = \Sigma_0$, let $\Omega$ be a neighbourhood of
$p$ with the properties that $\overline{\Omega} \subset U$, and such
that the boundary of the region $\Omega \cap \{ q \in U: t(q) \ge 0
\}$ is space-like\footnote{The existence of such a set, $\Omega$,
follows from the fact that $\Vert \mathbf{g}_{\eps}^{-1}
\Vert_{\mathbf{m}} = O(1)$ as $\eps \rightarrow 0$. Geometrically,
this condition means that the collection of timelike directions at a
given point is not collapsing to the empty set.}. We denote by
$S_\tau := \Sigma_\tau \cap \Omega$ and by $\Omega_\tau$ the open
part of $\Omega$ between $\Sigma$ and $\Sigma_\tau$. We denote the
part of the boundary of $\Omega_{\tau}$ with $0 \le t \le \tau$ by
$S_{\Omega, \tau}$, so that $\partial \Omega_{\tau} = S_0 \cup
S_{\tau} \cup S_{\Omega, \tau}$.

\vskip .2cm
\noindent{\textbf{Notation}}. In order to simplify calculations,
from now on we will adopt abstract index notation for (generalised)
tensorial objects (see, e.g.,~\cite{PR1}). In particular,
representatives of the metric $\mathbf{g}_{\eps}$ and its inverse
will be denoted by $g^{\eps}_{ab}$ and $g_{\eps}^{ab}$,
respectively, and similarly for the corresponding Riemannian metric
$\mathbf{e}_{\eps}$. We denote the representative of the generalised
normal vector field, $\boldsymbol{\xi}$, by $\xi_{\eps}^a$, and the
corresponding generalised covector field, $\boldsymbol{\sigma}$, by
$\xi^{\eps}_a$. In addition, to simplify the notation for tensors we
are going to use capital letters to abbreviate tuples of indices,
i.e., we will write $T^I_J$ for $T^{p_1\dots p_r}_{q_1\dots q_s}$
with $|I|=r$, $|J|=s$. Also for $I$, $J$ of equal length, say $r$,
we write $e_{IJ}$ for $e_{p_1q_1}e_{p_2q_2} \dots e_{q_rp_r}$.

\vskip .2cm
We now use the Riemannian metric $\mathbf{e}_{\eps}$ and the
covariant derivative with respect to $\mathbf{g}_{\eps}$ --- which
we have denoted by $\nabla^\eps$ --- to define $\eps$-dependent
Sobolev norms on $U$.

\begin{definition}[Sobolev norms]
Let $T^I_J$ be a smooth tensor field and $u$ a smooth function on
$U$, $\eps > 0$, $0 \leq \tau \leq \gamma$ and $k, j \in \N_0$.
\begin{enumerate}
\item We define the \lq\lq pointwise\rq\rq\ norm of $T^I_J$ by
\[
\|T^I_J\|_{e_\eps}^2 := e^\eps_{KL} e_\eps^{IJ} T^K_I T^L_J
\]
and the \lq\lq pointwise norm\rq\rq\ of covariant derivatives of $u$
by
\[
|\nabla_\eps^{(j)}u|^2
:=||\nabla^\eps_{p_1}\dots\nabla^\eps_{p_j}u||^2_{e_\eps}=e^{p_1q_1}_\eps
\dots e^{p_jq_j}_\eps \left( \nabla^\eps_{p_1} \dots
\nabla^\eps_{p_j} u \right) \left( \nabla^\eps_{q_1} \dots
\nabla^\eps_{q_j} u \right).
\]
\item On $\Omega_\tau$ we define Sobolev norms with respect to $\nabla_a^\eps$ resp.\ partial derivatives by
\begin{eqnarray*}
\SobODt{u}{k}&:=&
\left(\sum_{j=0}^k \int_{\Omega_\tau}|\nabla_\eps^{(j)}(u)|^2\mu^\eps\right)^{\frac{1}{2}}\\
\SobOdt{u}{k}&:=& \left( \sum_{\stack{p_1, \dots, p_j}{0\leq j\leq
k}}
\int_{\Omega_\tau}|\partial_{p_1}\dots\partial_{p_j}u|^2\mu^\eps\right)^{\frac{1}{2}},
\end{eqnarray*}
where $\mu^\eps$ denotes the volume form derived from
$\mathbf{g}_{\eps}$.
\item The respective \lq\lq three-dimensional\rq\rq\ Sobolev norms are defined by
\begin{eqnarray*}
\SobSDt{u}{k}&:=&
\left(\sum_{j=0}^k \int_{S_\tau}|\nabla_\eps^{(j)}(u)|^2\mu_\tau^\eps\right)^{\frac{1}{2}}\\
\SobSdt{u}{k}&:=& \left(\sum_{\stack{p_1, \dots, p_j}{0 \leq j \leq
k}} \int_{S_\tau}|\partial_{p_1}\dots\partial_{p_j}
u|^2\mu_\tau^\eps\right)^{\frac{1}{2}},
\end{eqnarray*}
where $\mu_\tau^\eps$ is the unique three-form induced on $S_\tau$
by $\mu^\eps$ such that $dt \wedge \mu_\tau^\eps = \mu^\eps$ holds
on $S_\tau$. Note that although the integration is performed over
the three-dimensional manifold $S_\tau$ only, derivatives are not
confined to directions tangential to $S_\tau$.
\end{enumerate}
\end{definition}

Observe that, due to the use of a generalised metric, even the norms
$\SobSdt{u}{k}$ depend on $\eps$. However, due to Condition~(A),
with $k=0$, they are equivalently to an $\eps$-independent norm
derived, for example, from the fixed background metric $\mathbf{m}$.
In the following, we will provide suitable higher order energy
estimates for nets of solutions of the wave equation. These
estimates are best expressed in terms of energy momentum tensors and
energy integrals, which we define following~\cite[Sec.\ 4]{VW}. For
the \lq\lq classical\rq\rq\ case, see~\cite[Sec.\ 7.4]{HE},
\cite[Sec.\ 4.4]{ClarkeBook} and~\cite{clemens} for a recent review.

\begin{definition}[Energy momentum tensors and energy integrals]
Let $u \in \Cc^\infty(U)$ and $k \in \N_0$. On $\Omega$ we define
\begin{enumerate}
\item the energy momentum tensors by ($k > 0$)
\begin{eqnarray*}
T^{ab, 0}_{\eps}(u) &:=&-\frac{1}{2}g^{ab}_\eps u^2,
\\
T^{ab, k}_{\eps}(u) &:=&\big( g^{ac}_\eps
g^{bd}_\eps-\frac{1}{2}g^{ab}_\eps g^{cd}_\eps \big)e^{p_1q_1}_\eps
\dots e^{p_{k-1}q_{k-1}}_\eps
(\nabla_c^\eps \nabla_{p_1}^\eps \dots \nabla_{p_{k-1}}^\eps u)
(\nabla_d^\eps \nabla_{q_1}^\eps \dots \nabla_{q_{k-1}}^\eps u),
\end{eqnarray*}
\item the energy integrals by
\begin{equation}
E^k_{\tau, \eps}(u):= \sum_{j=0}^k \int_{S_\tau} T^{ab, j}_\eps(u)
\xi_a \widehat{\xi}_b^\eps \mu_\tau^\eps, \qquad k \ge 0.
\label{HEintegrals}
\end{equation}
\end{enumerate}
\end{definition}

It may be verified, by direct calculation, that the tensor fields
$T^{ab, k}_{\eps}(u)$ satisfy the dominant energy condition. Indeed
it suffices to observe that, for any future-directed time-like
vector field $\mathbf{U}$, the expression $U^a U^b - \frac{1}{2}
\mathbf{g}_{\eps}(\mathbf{U}, \mathbf{U}) g_{\eps}^{ab}$ defines a
Riemannian metric for fixed $\eps$. For details, see Proposition~3.6
of~\cite{PhDthesis}. For a generalised formulation of the dominant
energy condition, see~\cite{algfound}.

\begin{remark}
The energy momentum tensors introduced above are related to the
super-energy tensors of Senovilla~\cite{Sen}. Omitting indices and $\eps$'s for the moment, we construct the super-energy tensor, $S^k$, of type
$(0, 2k)$ (see Definition 3.1 in~\cite{Sen}). Then $T^k(u)$ are
the $(2k-2)$-fold contraction of $S^k$ with the time-like vector
field $\boldsymbol{\xi}$. Theorem 4.1 of~\cite{Sen} then implies
that the tensors $S^k (k \ge 0)$ satisfy the dominant super-energy
property, from which it follows more elegantly that the $T^k(u)$ satisfy the dominant
energy condition.
\end{remark}

\begin{remark}
The energy integrals may be written in the more symmetrical form
\[
E^k_{\tau, \eps}(u):= \sum_{j=0}^k \int_{S_\tau} T^{ab, j}_\eps(u)
\widehat{\xi}_a^{\eps} \widehat{\xi}_b^\eps \widehat{\mu}_\tau^\eps,
\]
where we have defined the volume element $\widehat{\mu}_\tau^\eps =
V_{\eps}^{-1} \mu_\tau^\eps$ on $S_{\tau}$. In terms of the
decomposition of the metric given in equation~\eqref{coordmetric},
$\widehat{\mu}_\tau^\eps$ is the volume form on $S_{\tau}$ defined
by the three-dimensional metric $\mathbf{h}_{\eps} := h^{\eps}_{ij}
dx^i \otimes dx^j$.
\end{remark}

Since the part $S_{\Omega,\tau}$ of the  boundary of $\Omega$ is
space-like and $T^{ab, j}_\eps(u)$ satisfies the dominant energy
condition, an application of the Stokes theorem yields
\begin{eqnarray*}
\int_{\Omega_\tau} \nabla_a^\eps  \left( T^{ab, j}_\eps(u) \xi_b
\right)\mu_\eps
&=& \int_{S_\tau} T^{ab, j}_\eps(u) \xi_b \widehat{\xi}_a^{\eps}
\mu_\tau^\eps
 -\int_{S_0} T^{ab, j}_\eps(u) \xi_b \widehat{\xi}_a^{\eps} \mu_0^\eps
 +\int_{S_{\Omega, \tau}} T^{ab, j}_\eps(u) \xi_b n_a^{\eps} dS_\eps\\
&\ge&  \int_{S_\tau} T^{ab, j}_\eps(u) \xi_b \widehat{\xi}_a^{\eps}
\mu_\tau^\eps
 -\int_{S_0} T^{ab, j}_\eps(u) \xi_b \widehat{\xi}_a^{\eps} \mu_0^\eps,
\end{eqnarray*}
where $\mathbf{n}^{\eps}$  and $dS_{\eps}$ denote the unit normal
and surface element on $\partial \Omega_\tau$, respectively. Hence
summing over $j$ we have the following energy inequality for each
$\eps > 0$ and each $0 \leq \tau \leq \gamma$
\begin{equation}
\label{energyhierarchystokes} E^k_{\tau, \eps}(u) \leq E^k_{\tau=0,
\eps}(u) + \sum_{j=0}^k \int_{\Omega_\tau} \left( \xi_b
\nabla_a^\eps T^{ab, j}_\eps(u) + T^{ab, j}_\eps(u) \nabla_a^\eps
\xi_b \right)\mu_\eps.
\end{equation}

Note that the energy integrals and foliation used here correspond
closely to those used in~\cite[Sec.\ 4.3]{HE}.
In~\cite[pp.~1341]{VW}, due to a different choice of foliation,
inequality~\eqref{energyhierarchystokes} is replaced with an
equality. This alternative foliation allows one to work without the
explicit use of the dominant energy condition, but complicates some
of the resulting energy estimates.

To end this section, we prove the equivalence of the Sobolev norms
and the energy integrals. Note that this result is the analogue of
Lemma~1 in~\cite{VW} for our class of metrics, and is one of the key
estimates in our approach.

\begin{lemma}[Energy integrals and Sobolev norms]
\label{lemma1}
\quad\\
\begin{enumerate}
\item
There exist constants $A, A'$ such that for each $k\geq 0$
\begin{equation}
\label{ineqEXSD} A'(\SobSDt{u}{k})^2 \leq E^k_{\tau, \eps}(u) \leq
A(\SobSDt{u}{k})^2
\end{equation}
\item For each $k\geq 1$, there exist positive constants $B_k, B_k'$ such that
\begin{eqnarray}
\label{ineqSDXSd}
(\SobSDt{u}{k})^2&\leq&B_k'\sum_{j=1}^k\frac{1}{\eps^{2(k-j)}}(\SobSdt{u}{j})^2
\\
\label{ineqSdXSD}
(\SobSdt{u}{k})^2&\leq&B_k\sum_{j=1}^k\frac{1}{\eps^{2(k-j)}}(\SobSDt{u}{j})^2
\end{eqnarray}
For $k=0$ we simply have $(\SobSDt{u}{0})^2=(\SobSdt{u}{0})^2$.
\end{enumerate}
\end{lemma}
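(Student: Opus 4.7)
The two statements have different flavours. Part~(1) is an algebraic identity relating the contracted energy tensor to $|\nabla^{(k)}_\eps u|^2$, combined with the uniform bound $1/M_0 \le V_\eps \le M_0$ from~\eqref{setting1}. Part~(2) is a bookkeeping induction that converts partial derivatives into covariant ones (and back), with Christoffel corrections appearing at each step; each correction costs exactly one power of $\eps^{-1}$ in view of~\eqref{coeffestimates}.

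\textbf{Part (1).} Since $g^{ab}_\eps \xihatdep{a}\xihatdep{b} = -1$, inverting $\mathbf{e}_\eps = \mathbf{g}_\eps + 2\bsigepshat\otimes\bsigepshat$ yields $e^{ab}_\eps = g^{ab}_\eps + 2\,\xihatuep{a}\xihatuep{b}$. Contracting $T^{ab,k}_\eps(u)$ with $\xihatdep{a}\xihatdep{b}$ collapses the prefactor $g^{ac}_\eps g^{bd}_\eps-\tfrac{1}{2}g^{ab}_\eps g^{cd}_\eps$ to $\xihatuep{c}\xihatuep{d}+\tfrac{1}{2}g^{cd}_\eps=\tfrac{1}{2}e^{cd}_\eps$, so that
\[
T^{ab,k}_\eps(u)\,\xihatdep{a}\xihatdep{b} \;=\; \tfrac{1}{2}|\nabla^{(k)}_\eps u|^2 \qquad (k\ge 1),
\]
while $T^{ab,0}_\eps(u)\,\xihatdep{a}\xihatdep{b} = \tfrac{1}{2}u^2$. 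Using $\xi_a = V_\eps\,\xihatdep{a}$, this gives
\[
E^k_{\tau,\eps}(u) \;=\; \tfrac{1}{2}\sum_{j=0}^k \int_{S_\tau} V_\eps\,|\nabla^{(j)}_\eps u|^2\,\mu^\eps_\tau,
\]
and~\eqref{ineqEXSD} follows immediately from $1/M_0 \le V_\eps \le M_0$ with $A = M_0/2$, $A' = 1/(2 M_0)$.

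\textbf{Part (2).} I would argue by induction on $k$, using $\nabla^\eps_b T_{a_1\dots a_r} = \partial_b T_{a_1\dots a_r} - \sum_i \Gamma^c_{ba_i}[\mathbf{g}_\eps]\, T_{a_1\dots c\dots a_r}$ starting from $\nabla^\eps_a u = \partial_a u$. This produces a representation
\[
\nabla^\eps_{a_1}\cdots\nabla^\eps_{a_k} u \;=\; \partial_{a_1}\cdots\partial_{a_k} u + \sum_{j=1}^{k-1} (P_\eps^{(k,j)})_{a_1\dots a_k}{}^{b_1\dots b_j}\,\partial_{b_1}\cdots\partial_{b_j} u,
\]
where each $P_\eps^{(k,j)}$ is a polynomial in $\Gamma[\mathbf{g}_\eps]$ and its partial derivatives. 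From Condition~(A) applied to $\mathbf{g}_\eps$ and $\mathbf{g}_\eps^{-1}$ and the standard formula for $\Gamma$, one has $|\partial^\alpha \Gamma[\mathbf{g}_\eps]|_{\mathbf{m}} = O(\eps^{-|\alpha|-1})$ uniformly on compact sets, and a straightforward induction shows $|P_\eps^{(k,j)}|_{\mathbf{m}} = O(\eps^{-(k-j)})$. Since $\|\mathbf{e}_\eps\|_{\mathbf{m}}, \|\mathbf{e}_\eps^{-1}\|_{\mathbf{m}} = O(1)$ by Condition~(A), the same bound persists in the $e_\eps$-norm, so pointwise
\[
|\nabla^{(k)}_\eps u|^2_{e_\eps} \;\le\; C\sum_{j=1}^k \eps^{-2(k-j)}\sum_{p_1,\dots,p_j}|\partial_{p_1}\cdots \partial_{p_j} u|^2.
\]
Multiplying by $\mu^\eps_\tau$ (whose coordinate density is uniformly bounded above and below, again by Condition~(A)) and integrating over $S_\tau$ yields~\eqref{ineqSDXSd}. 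The same algebraic identity can be solved for $\partial_{a_1}\cdots \partial_{a_k} u$ in terms of $\nabla^{(l)}_\eps u$, $l\le k$; a parallel induction with the same $\Gamma$-bounds then produces~\eqref{ineqSdXSD}.

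\textbf{Expected main obstacle.} The only non-routine ingredient is the inductive count in Part~(2): one must verify that every term featuring $\partial^{(j)}u$ in the expansion of $\nabla^{(k)}_\eps u$ is controlled by exactly $\eps^{-(k-j)}$. This is clean once one observes that each application of $\nabla^\eps$ to an existing expression either differentiates it (at no $\eps$-cost) or produces a new Christoffel contraction (cost $\eps^{-1}$), and that a partial derivative applied to an existing $\Gamma$-factor raises its $\eps^{-1}$-exponent by exactly one.
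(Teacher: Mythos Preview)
Your proposal is correct and follows the same approach as the paper. Part~(1) is identical in substance (the paper contracts directly with $\xi_a\xihatdep{b}$ rather than with $\xihatdep{a}\xihatdep{b}$ and then multiplying by $V_\eps$, but this is the same computation leading to $T^{ab,j}_\eps(u)\,\xi_a\xihatdep{b}=\tfrac{1}{2}V_\eps|\nabla^{(j)}_\eps u|^2$ and the same constants $A=M_0/2$, $A'=1/(2M_0)$); for Part~(2) the paper gives only a one-line sketch invoking equivalence of $\mathbf{e}_\eps$ and $\delta_{ab}$ together with the $\eps^{-k}$ bounds on $\partial^{k-1}\Gamma[\mathbf{g}_\eps]$, and your inductive expansion is precisely the explicit unpacking of that sketch.
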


\begin{proof}
(1): For $k=0$ we have
\[
T^{ab, 0}_\eps(u) \xi_a \widehat{\xi}_b^{\eps}= -\frac{1}{2}
g^{ab}_\eps \xi_a \widehat{\xi}_b^{\eps}u^2
=-\frac{1}{2}\sqrt{-\mathbf{g}_\eps(\boldsymbol{\xi}_\eps,
\boldsymbol{\xi}_\eps)}u^2=\frac{V_\eps}{2}u^2,
\]
hence by~\eqref{setting1}, setting $A:=M_0/2$ and $A':=1/({2 M_0})$,
we obtain
\[
A'u ^2 \leq T^{ab, 0}_\eps(u) \xi_a \widehat{\xi}_b^{\eps} \leq A
u^2,
\]
which upon integrating gives the result.

For the case $k>0$ note that
\begin{eqnarray*}
(g^{ac}_\eps g^{bd}_\eps-\frac{1}{2}g^{ab}_\eps g^{cd}_\eps) \xi_a
\widehat{\xi}_b^\eps &=& \frac{1}{2} V_\eps \left(g^{cd}_\eps +
\frac{2}{V_\eps^2} \xi_\eps^c \xi_\eps^d \right) \, =\,
\frac{1}{2}V_\eps e^{cd}_\eps.
\end{eqnarray*}
Hence, we may write
\begin{align*}
T^{ab, j}_{\eps}(u) \xi_a \widehat{\xi}_b^\eps &= \frac{1}{2} V_\eps
e^{cd}_\eps e^{p_1q_1}_\eps \dots e^{p_{j-1}q_{j-1}}_\eps
(\nabla_c^\eps \nabla_{p_1}^\eps \dots \nabla_{p_{j-1}}^\eps u)
(\nabla_d^\eps \nabla_{q_1}^\eps \dots\nabla_{q_{j-1}}^\eps u)
\\
&= \frac{1}{2} V_\eps | \nabla^{(j)}_\eps u|^2.
\end{align*}
Using~(A), this implies that
\[
A'| \nabla^{(j)}_\eps u|^2 \leq T^{ab, j}_\eps(u) \xi_a
\widehat{\xi}_b^\eps \leq A| \nabla^{(j)}_\eps u|^2,
\]
which upon summation and integration establishes the claim.

\vskip .2cm
(2) follows by (A) from the fact that on the compact closure of
$\Omega$ the metrics $\mathbf{e}_{\eps}$ and $\delta_{ab}$ are
equivalent and the Christoffel symbols and its derivatives are
bounded by the respective inverse powers of $\eps$.
\end{proof}

\section{Energy estimates}
\label{partb}

In this section, we establish the core estimates needed in the proof
of our main theorem.

\begin{proposition}
\label{energyinequality} Let $u_{\eps}$ be a solution
of~\eqref{weqofsettingeps} on $U$. Then, for each $k \ge 1$, there
exist positive constants $C_k', C_k'', C_k'''$ such that for each
$0\leq\tau\leq\gamma$ we have
\begin{eqnarray}
\label{energyinequalitylevelkformula} E^k_{\tau, \eps}(u_\eps)
&\leq& E^k_{0, \eps}(u_\eps) +C_k'(\SobODt{f_\eps}{k-1})^2
+C_k''\sum_{j=1}^{k-1}\frac{1}{\eps^{2(1+k-j)}}
\int_{\zeta=0}^\tau E_{\zeta, \eps}^j(u_\eps)d\zeta\nonumber\\
&&+C_k''' \int_{\zeta=0}^\tau E_{\zeta, \eps}^k(u_\eps) d\zeta.
\end{eqnarray}
\end{proposition}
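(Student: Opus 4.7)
The plan is to bound the bulk integrand of the Stokes inequality~\eqref{energyhierarchystokes}: for each $0 \leq j \leq k$ I would estimate
\[
I^j_\eps := \xi_b \nabla_a^\eps T^{ab,j}_\eps(u_\eps) + T^{ab,j}_\eps(u_\eps) \nabla_a^\eps \xi_b
\]
pointwise, integrate over $\Omega_\tau$, foliate by the level sets $S_\zeta$ with $\zeta \in [0,\tau]$, and invoke Lemma~\ref{lemma1}(1) to convert the resulting spatial $L^2$-integrals into the energy integrals $E^\beta_{\zeta,\eps}(u_\eps)$. The second summand is harmless: Condition~(B) and the derived bound~\eqref{nablaxi} give $\|\nabla^\eps \boldsymbol{\xi}_\eps\|_{\mathbf{e}_\eps} = O(1)$ on $\overline{\Omega}$, hence $|T^{ab,j}_\eps(u_\eps) \nabla_a^\eps \xi_b| \leq C |\nabla^{(j)}_\eps u_\eps|^2$, which after integration feeds into the $C_k''' \int_0^\tau E^k$ term.

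The substantive work is the divergence $\nabla_a^\eps T^{ab,j}_\eps(u_\eps)$. Since $\nabla^\eps \mathbf{g}_\eps = 0$, the metric factors in $T^{ab,j}_\eps$ are inert under $\nabla_a^\eps$; the derivative must fall either on the contractions $e^\eps_{p_iq_i}$ (whose covariant derivative is $O(1)$ by~\eqref{nablae}, producing subordinate terms of the same schematic shape as below) or on the iterated derivatives of $u_\eps$. The principal contribution arises from $g^{ac}_\eps \nabla_a^\eps \nabla_c^\eps = \Box^\eps$ applied to $\nabla^\eps_{p_1} \cdots \nabla^\eps_{p_{j-1}} u_\eps$; using the wave equation $\Box^\eps u_\eps = f_\eps$,
\[
\Box^\eps \nabla^\eps_{p_1} \cdots \nabla^\eps_{p_{j-1}} u_\eps = \nabla^\eps_{p_1} \cdots \nabla^\eps_{p_{j-1}} f_\eps + [\Box^\eps,\, \nabla^\eps_{p_1} \cdots \nabla^\eps_{p_{j-1}}] u_\eps.
\]
Iterated Ricci identities expand the commutator as a sum of terms of the schematic form $(\nabla^{(\alpha)} R^\eps)(\nabla^{(\beta)} u_\eps)$ with $\alpha + \beta = j-1$ and $1 \leq \beta \leq j-1$, while the trace piece $-\tfrac{1}{2} g^{ab}_\eps g^{cd}_\eps$ of $T^{ab,j}_\eps$ generates only the classical symmetric-Hessian cancellation plus further subordinate terms.

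By Condition~(A) we have $\|\nabla^{(\alpha)} R^\eps\|_{\mathbf{e}_\eps} = O(\eps^{-(2+\alpha)})$, so after contracting with $\xi_b$ (bounded by Condition~(A)) and the remaining $\nabla^{(j)}_\eps u_\eps$ factor, and applying the Peter--Paul inequality $|ab| \leq a^2 + \tfrac{1}{4} b^2$ to every cross term, a pointwise bound of the shape
\[
|I^j_\eps| \leq C_j \Big( |\nabla^{(j)}_\eps u_\eps|^2 + |\nabla^{(j-1)}_\eps f_\eps|^2 + \sum_{\beta=1}^{j-1} \eps^{-2(1+j-\beta)} |\nabla^{(\beta)}_\eps u_\eps|^2 \Big)
\]
emerges, the exponent of $\eps$ arising as $\eps^{-2(2+\alpha)}$ with $\alpha = j-1-\beta$. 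Integration over $\Omega_\tau$, the co-area style identity $\int_{\Omega_\tau} F\,\mu^\eps \sim \int_0^\tau d\zeta \int_{S_\zeta} F\,\mu_\zeta^\eps$ (the Jacobian $1/V_\eps$ being $O(1)$ by~\eqref{setting1}), and Lemma~\ref{lemma1}(1) then convert each spatial integral $\int_{S_\zeta} |\nabla^{(\beta)}_\eps u_\eps|^2 \mu_\zeta^\eps$ into $E^\beta_{\zeta,\eps}(u_\eps)$ up to a constant. Summing over $0 \leq j \leq k$ collects the $|\nabla^{(j)}_\eps u_\eps|^2$ contributions into $C_k''' \int_0^\tau E^k_{\zeta,\eps}\,d\zeta$, the $f$-terms into $C_k' (\SobODt{f_\eps}{k-1})^2$, and the cross terms into $C_k'' \sum_{j=1}^{k-1} \eps^{-2(1+k-j)} \int_0^\tau E^j_{\zeta,\eps}\,d\zeta$; the worst exponent $\eps^{-2(1+k-\beta)}$ on $E^\beta$ is achieved at the top level $j=k$, which accounts precisely for the power appearing in the statement.

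The main obstacle is the meticulous bookkeeping of the commutator $[\Box^\eps,\, \nabla^\eps_{p_1} \cdots \nabla^\eps_{p_{j-1}}]$ in the generalised setting: one must verify that every iterated Ricci-identity contraction falls into the schematic form $(\nabla^{(\alpha)} R^\eps)(\nabla^{(\beta)} u_\eps)$ with the $\eps$-asymptotics prescribed by Condition~(A), and that the classical symmetric-Hessian cancellation in the $-\tfrac{1}{2} g^{ab}_\eps g^{cd}_\eps$ piece survives when working with a net of representatives rather than a single smooth metric. Once this structural bound is in hand, the remaining steps (Peter--Paul, foliation, Lemma~\ref{lemma1}) are routine.
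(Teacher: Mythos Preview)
Your proposal is correct and follows essentially the same route as the paper: the paper organises $\xi_b \nabla_a^\eps T^{ab,k}_\eps$ into three pieces $I_1$ (the $\nabla^\eps e^{IJ}_\eps$ terms, controlled by~\eqref{nablae}), $I_2$ (the outermost commutator $[\nabla^\eps_a,\nabla^\eps_c]\nabla^\eps_I u_\eps$), and $I_3$ (the $\Box^\eps\nabla^\eps_I u_\eps$ term, rewritten as $\nabla^\eps_I f_\eps$ plus curvature-times-lower-derivative contributions), then applies Cauchy--Schwarz, the curvature bound $\|\nabla^{(l)}R^\eps\|_{e_\eps}\le F_l\eps^{-(2+l)}$, and Lemma~\ref{lemma1} exactly as you outline. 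Your stated obstacle is not a genuine one: every computation is carried out at fixed $\eps$, where $\mathbf{g}_\eps$ is an honest smooth Lorentzian metric, so the Ricci identities and the symmetric-Hessian cancellation hold verbatim---the net structure enters only through the $\eps$-asymptotics of the resulting constants.
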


Before proving this statement, we draw the essential conclusions
from it. Observe that the constant in front of the highest order
term on the r.h.s.\ does not depend on $\eps$, hence we obtain, by
an application of Gronwall's lemma.

\begin{corollary}
\label{applicationenergygronwall} \label{energyestimate} Let
$u_{\eps}$ be a solution of~\eqref{weqofsettingeps} on $U$. Then,
for each $k \geq 1$, there exist positive constants $C_k', C_k'',
C_k'''$ such that for each $0 \leq \tau \leq \gamma$,
\begin{equation}
\label{applicationenergygronwallformula} E^k_{\tau,
\eps}(u_\eps)\leq \left(E^k_{0, \eps}(u_\eps)+C_k'
(\SobODt{f_\eps}{k-1})^2+C_k''\sum_{j=1}^{k-1}\frac{1}{\eps^{2(1+k-j)}}
\int\limits_{\zeta=0}^\tau E_{\zeta, \eps}^j(u_\eps) d\zeta\right)
e^{C_k'''\tau}
\end{equation}
\end{corollary}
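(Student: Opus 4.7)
The plan is a straight application of Gronwall's integral inequality to the bound of Proposition~\ref{energyinequality}. Define
\[
F(\tau) := E^k_{\tau, \eps}(u_\eps), \qquad
G(\tau) := E^k_{0, \eps}(u_\eps) + C_k'\,(\SobODt{f_\eps}{k-1})^2 + C_k'' \sum_{j=1}^{k-1} \frac{1}{\eps^{2(1+k-j)}} \int_0^\tau E^j_{\zeta, \eps}(u_\eps)\, d\zeta,
\]
so that \eqref{energyinequalitylevelkformula} takes the form $F(\tau) \le G(\tau) + C_k''' \int_0^\tau F(\zeta)\, d\zeta$ on $[0,\gamma]$. Before invoking Gronwall one must check that $G$ is nondecreasing in $\tau$: the first summand is $\tau$-independent; the second is nondecreasing because $\Omega_\tau \subseteq \Omega_{\tau'}$ whenever $\tau \le \tau'$ and the integrand $|\nabla^{(j)}_\eps f_\eps|^2$ is pointwise nonnegative; and each term of the $j$-sum is nondecreasing because, by Lemma~\ref{lemma1}(1), $E^j_{\zeta,\eps}(u_\eps) \ge A'(\SobSDz{u_\eps}{j})^2 \ge 0$, so the integrand in $\zeta$ is nonnegative.

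With the monotonicity of $G$ at hand, the standard Gronwall lemma (integral form, nondecreasing majorant) applied on $[0,\tau]$ yields $F(\tau) \le G(\tau)\, e^{C_k''' \tau}$, which is exactly \eqref{applicationenergygronwallformula}. The one point worth underlining --- and arguably the only place where one must stay alert --- is that the coefficient $C_k'''$ in front of $\int_0^\tau F(\zeta)\, d\zeta$ in \eqref{energyinequalitylevelkformula} is \emph{independent of $\eps$}, so the exponential factor $e^{C_k''' \tau}$ is also $\eps$-independent and remains bounded as $\eps \to 0$. This $\eps$-uniformity is precisely what the careful bookkeeping of Proposition~\ref{energyinequality} was designed to produce: it is what will let us iterate the corollary inductively on $k$ (the lower-order terms $E^j_{\zeta,\eps}$, $j < k$, carry only finitely negative powers of $\eps$ and thus yield moderate bounds) in the proof of Theorem~\ref{mainthm}. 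There is no substantive obstacle within the corollary itself beyond this observation and the monotonicity check.
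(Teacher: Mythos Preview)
Your proof is correct and follows exactly the paper's approach: the authors simply note that the constant $C_k'''$ in front of the highest-order term in~\eqref{energyinequalitylevelkformula} is independent of $\eps$ and then invoke Gronwall's lemma. Your additional verification that $G$ is nondecreasing (needed for the clean form $F(\tau)\le G(\tau)e^{C_k'''\tau}$) is a welcome point of rigor that the paper leaves implicit.
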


This statement immediately implies the main result in this section.
\begin{corollary}
\label{energiesviainitialenergies} Let $u_{\eps}$ be a solution
of~\eqref{weqofsettingeps} on $U$. If, all $k \ge 1$, the initial
energy $(E^k_{0, \, \eps}(u_\eps))_\eps$ is a moderate resp.\
negligible net of real numbers, and $(f_\eps)_\eps$ is negligible
then
\[
\sup_{0 \leq \tau \leq \gamma} (E^k_{\tau, \, \eps}(u_\eps))_\eps
\]
is moderate resp.\ negligible.
\end{corollary}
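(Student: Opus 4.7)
The plan is to prove the statement by induction on $k$, using Corollary~\ref{applicationenergygronwall} as the workhorse and observing that all the $\eps$-dependent prefactors that appear are harmless with respect to both moderateness and negligibility.

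For the base case $k=1$, the sum $\sum_{j=1}^{k-1}$ in~\eqref{applicationenergygronwallformula} is empty, so we obtain
\[
 E^1_{\tau,\eps}(u_\eps) \;\le\; \bigl( E^1_{0,\eps}(u_\eps) + C_1'\,(\SobODt{f_\eps}{0})^2 \bigr)\,e^{C_1'''\gamma}
\]
uniformly for $\tau\in[0,\gamma]$. Since $(f_\eps)_\eps$ is negligible, the Sobolev-norm term is negligible (in particular $O(\eps^m)$ for every $m$, hence certainly moderate). The factor $e^{C_1'''\gamma}$ is a fixed constant. Therefore, if $(E^1_{0,\eps}(u_\eps))_\eps$ is moderate (resp.\ negligible), so is $\sup_{0\le\tau\le\gamma}E^1_{\tau,\eps}(u_\eps)$.

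For the inductive step, suppose the conclusion holds for $1,\dots,k-1$. Then each $\sup_{0\le\zeta\le\gamma} E^j_{\zeta,\eps}(u_\eps)$ is moderate (resp.\ negligible) for $j=1,\dots,k-1$, and therefore
\[
 \int_{0}^{\tau} E^j_{\zeta,\eps}(u_\eps)\,d\zeta \;\le\; \gamma \sup_{0\le\zeta\le\gamma} E^j_{\zeta,\eps}(u_\eps)
\]
is again moderate (resp.\ negligible). The crucial observation is that the multiplication by $\eps^{-2(1+k-j)}$ preserves both properties: moderate $\times$ $\eps^{-N}$ is moderate, and negligible $\times$ $\eps^{-N}$ is still negligible (since $O(\eps^m)$ for every $m$ beats any fixed negative power). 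Combined with the negligibility of $(\SobODt{f_\eps}{k-1})^2$ and the uniform bound $e^{C_k'''\tau}\le e^{C_k'''\gamma}$, Corollary~\ref{applicationenergygronwall} yields the claim for level $k$.

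Taking the supremum over $\tau\in[0,\gamma]$ is immediate since every term on the right-hand side of~\eqref{applicationenergygronwallformula} is either independent of $\tau$ or monotonically increasing in $\tau$, so one may simply evaluate at $\tau=\gamma$. No genuine obstacle arises; the only point that requires a moment of care is the bookkeeping that the fixed polynomial loss $\eps^{-2(1+k-j)}$ is absorbed by the definition of a moderate net (and is beaten by any negligible net), which is exactly the feature that makes Colombeau-style estimates robust under such inductive arguments.
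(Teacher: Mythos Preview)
Your proposal is correct and follows exactly the route the paper has in mind: the paper does not write out a proof of this corollary at all, remarking only that it is an immediate consequence of Corollary~\ref{applicationenergygronwall}, and your induction on $k$ is precisely the way to unpack that remark. The only point worth noting is that you silently use the fact that negligibility of $(f_\eps)_\eps$ implies negligibility of $(\SobODt{f_\eps}{k-1})_\eps$; this is true but requires the equivalence of the covariant and partial-derivative Sobolev norms up to inverse powers of $\eps$ (Lemma~\ref{lemma1}(2), adapted to $\Omega_\tau$), which is again harmless for negligible nets.
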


\begin{proof}[Proof of~\ref{energyinequality}]
We begin by estimating the second integrand on the r.h.s.\ of
equation~\eqref{energyhierarchystokes}. Using the fact that the
energy tensors are symmetric then, by an application of the
Cauchy-Schwarz inequality to the inner product induced on the tensor
bundle $\mathcal{T}^{2}_0(M)$ by the metric $\mathbf{e}_{\eps}$, we
deduce that
\begin{equation}\label{modi1}
\left| T^{ab, j}_\eps(u_{\eps}) \nabla_a^\eps \xi_b \right| \le
\Vert T^{ab, j}_\eps(u_{\eps}) \Vert_{\mathbf{e}_{\eps}} \Vert
\nabla_{(a}^\eps \xi_{b)} \Vert_{\mathbf{e}_{\eps}} = \frac{1}{2}
\Vert T^{ab, j}_\eps(u_{\eps}) \Vert_{\mathbf{e}_{\eps}} \Vert
\mathscr{L}_{\boldsymbol{\xi}_{\eps}} \mathbf{g}_{\eps}
\Vert_{\mathbf{e}_{\eps}}.
\end{equation}
Equation~\eqref{nablaxi} implies that there exists a constant $K >
0$ such that $\Vert \mathscr{L}_{\boldsymbol{\xi}_{\eps}}
\mathbf{g}_{\eps} \Vert_{\mathbf{e}_{\eps}} \le K$. In the case
$j=0$, we have
\[
\Vert T^{ab, 0}_\eps(u_{\eps}) \Vert_{\mathbf{e}_{\eps}}^2 = \left(
- \frac{1}{2} \right)^2 u_{\eps}^4 e^{\eps}_{ab} e^{\eps}_{cd}
g_{\eps}^{ac} g_{\eps}^{bd} = u_{\eps}^4,
\]
so
\[
\Vert T^{ab, 0}_\eps(u_{\eps}) \Vert_{\mathbf{e}_{\eps}} =
u_{\eps}^2 = | \nabla^{(0)}_\eps u_{\eps}|^2.
\]
For $j \ge 1$, we have
\begin{align*}
\Vert T^{ab, j}_\eps(u_{\eps}) \Vert_{\mathbf{e}_{\eps}}^2 &=
\eep{aa'} \eep{bb'} \left( g^{ac}_\eps g_{\eps}^{bd} - \frac{1}{2}
g_{\eps}^{ab} g^{cd}_\eps \right) \left( g^{a'c'}_\eps
g_{\eps}^{b'd'} - \frac{1}{2} g_{\eps}^{a'b'} g^{c'd'}_\eps \right)
\\
&\hskip 3cm \times \left( \vphantom{|^|} \nabla_c \nabla_I u_{\eps}
\right) \left( \vphantom{|^|} \nabla_d \nabla_J u_{\eps} \right)
\left( \vphantom{|^|} \nabla_{c'} \nabla_{I'} u_{\eps} \right)
\left( \vphantom{|^|} \nabla_{d'} \nabla_{J'} u_{\eps} \right)
e^{IJ}_\eps e^{I'J'}_\eps
\\
&= e_{\eps}^{cc'} e_{\eps}^{dd'} \left( \vphantom{|^|} \nabla_c
\nabla_I u_{\eps} \right) \left( \vphantom{|^|} \nabla_d \nabla_J
u_{\eps} \right) \left( \vphantom{|^|} \nabla_{c'} \nabla_{I'}
u_{\eps} \right)  \left( \vphantom{|^|} \nabla_{d'} \nabla_{J'}
u_{\eps} \right) e^{IJ}_\eps e^{I'J'}_\eps
\\
&\le A_j | \nabla^{(j)}_\eps u_{\eps}|^2,
\end{align*}
where $A_j \sim 4^j$ are combinatorial constants. Letting $A_0 :=
1$, we deduce that
\[
\left| T^{ab, j}_\eps(u_{\eps}) \nabla_a^\eps \xi_b \right| \le
\frac{1}{2} A_j K | \nabla^{(j)}_\eps u_{\eps}|^2, \quad \mbox{for
$j \ge 0$.}
\]
Letting $\tilde{A_k} := \max_{j = 0, \dots, k} A_k$, we therefore
find that
\begin{equation}
\left| \sum_{j=0}^k \int_{\Omega_\tau} T^{ab, j}_\eps(u_{\eps})
\nabla_a^\eps \xi_b \mu_\eps \right| \le \frac{1}{2} \tilde{A}_k K
(\SobODt{u_{\eps}}{k})^2 \label{term2}
\end{equation}
for $k \ge 0$.

\vskip .2cm
We now consider the first integrand on r.h.s.\
of~\eqref{energyhierarchystokes}. Beginning with the case $k=1$, the
divergence terms that we require take the form
\begin{eqnarray*}
\nabla_a^\eps T^{ab, 0}_\eps(u_\eps) &=& -\frac{1}{2}(\nabla_a^\eps
g^{ab}_\eps) u_\eps^2 - \left( \frac{1}{2} g_\eps^{ab}
\right)(2u_\eps\nabla_a^\eps u_\eps)
\, =\, -u_\eps \nabla_\eps^b u_\eps\\
\nabla_a^\eps T^{ab, 1}_\eps(u_\eps) &=& \left( g^{ac}_\eps
g^{bd}_\eps - \frac{1}{2} g^{ab}_\eps g^{cd}_\eps \right)
(\nabla_a^\eps\nabla_c^\eps u_\eps\nabla_d^\eps u_\eps
+\nabla_c^\eps u_\eps\nabla_a^\eps\nabla_d^\eps u_\eps)\nonumber\\
&=&\nabla^c_\eps\nabla_c^\eps u_\eps\nabla^b_\eps u_\eps \, =\,
(\Box^\eps u_\eps) \nabla^b_\eps u_\eps \, =\, f_\eps \nabla^b_\eps
u_\eps.\label{Tab1wave}
\end{eqnarray*}
Inserting this and the $k=1$ form of~\eqref{term2}
into~\eqref{energyhierarchystokes} yields
\begin{eqnarray*}
E^1_{\tau, \eps}(u_\eps) &\leq& E^1_{0, \eps}(u_\eps) +
\int_{\Omega_\tau} \xi_{\eps}^a\nabla_a^\eps u_\eps
(f_\eps-u_\eps)\mu_\eps + \frac{1}{2} \tilde{A}_1 K
(\SobODt{u_{\eps}}{1})^2
\\
&\leq& E^1_{0, \eps}(u_\eps) +\left(
\int_{\Omega_\tau}(\xi_{\eps}^a\nabla_a^\eps
u_\eps)^2\mu_\eps\right)^{\frac{1}{2}} \left(
\int_{\Omega_\tau}|f_\eps-u_\eps|^2\mu_\eps\right)^{\frac{1}{2}}  +
\frac{1}{2} \tilde{A}_1 K (\SobODt{u_{\eps}}{1})^2
\\
&\leq& E^1_{0, \eps}(u_\eps) + M_0 \left( \int_{\Omega_\tau} |
\nabla_\eps^{(1)}u_\eps |^2\mu_\eps\right)^{\frac{1}{2}} \,
\left(\left(
\int_{\Omega_\tau}|f_\eps|^2\mu_\eps\right)^{\frac{1}{2}} +\left(
\int_{\Omega_\tau}|u_\eps|^2\mu_\eps\right)^{\frac{1}{2}}\right)
\\
& &+ \frac{1}{2} \tilde{A}_1 K (\SobODt{u_{\eps}}{1})^2
\\
&\leq& E^1_{0, \eps}(u_\eps) +\frac{M_0}{2}\left(
\int_{\Omega_\tau}\left(|\nabla_\eps^{(1)}u_\eps|^2+|u_\eps|^2\right)\mu_\eps
+ \int_{\Omega_\tau}|\nabla_\eps^{(1)}u_\eps|^2\mu_\eps+
\int_{\Omega_\tau}|f_\eps|^2\mu_\eps \right)
\\
& &+ \frac{1}{2} \tilde{A}_1 K (\SobODt{u_{\eps}}{1})^2
\\
&\leq& E^1_{0, \eps}(u_\eps) +\frac{M_0}{2}\,
\left(\SobODt{f_\eps}{0}\right)^2 + \left( M_0 + \frac{1}{2}
\tilde{A}_1 K \right)\, \left(\SobODt{u_\eps}{1}\right)^2,
\end{eqnarray*}
where we have repeatedly used the Cauchy-Schwarz inequality. Now we
use~\eqref{ineqEXSD} to obtain
\[
\left(\SobODt{u_\eps}{1}\right)^2 =
\int_{\zeta=0}^\tau(\SobSDz{u_\eps}{1})^2 d\zeta \leq\frac{1}{A'}
\int_{\zeta=0}^\tau E^1_{\zeta, \eps}(u_\eps)d\zeta.
\]
Setting $C_1' := M_0/2$, $C_1''=0$, and $C_1''' := ( M_0 +
\tfrac{1}{2} \tilde{A}_1 K )/A' = 2 M_0 ( M_0 + \tfrac{1}{2}
\tilde{A}_1 K )$ yields the claim for $k=1$.

\bigskip
We now turn to the case $k>1$. We first derive an estimate for
\[
\xi_b \nabla^\eps_a T^{ab, k}_\eps(u_\eps) = I_1 + I_2 + I_3,
\]
where we have defined
\begin{align*}
I_1 &:= \left( g^{ac}_\eps \xi_\eps^d - \frac{1}{2} \xi_\eps^a
g^{cd}_\eps \right) \left( \nabla_a^\eps e^{IJ}_\eps \right)
(\nabla_c^\eps \nabla_I^\eps u_\eps)( \nabla_d^\eps \nabla_J^\eps
u_\eps)
\\
I_2 &:= - 2 e^{IJ}_\eps \left( \nabla_d^\eps \nabla_J^\eps u_\eps
\right) \left( \xi_\eps^a g^{cd}_\eps \nabla^\eps_{[a}
\nabla^\eps_{c]} \nabla_I^\eps u_\eps \right)
\\
I_3 &:= e^{IJ}_\eps \left( \xi_\eps^d \nabla_d^\eps \nabla_J^\eps
u_\eps \right) \left( g^{ac}_\eps \nabla_a^\eps \nabla_c^\eps
\nabla_I^\eps u_\eps \right)
\end{align*}
The strategy is, again, to remove the terms involving derivatives of
order $k+1$ using the wave equation. This requires interchanging the
order of covariant derivatives, and therefore introduces additional
curvature terms. We now calculate the moduli of the terms $I_1$,
$I_2$, $I_3$ separately.

We begin by estimating $|I_1|$:
\begin{align*}
|I_1| &= \left| \left( g^{ac}_\eps \xi_{\eps}^d - \frac{1}{2}
\xi_{\eps}^a g^{cd}_\eps \right) \left( \nabla_a^\eps e^{IJ}_\eps
\right) (\nabla_c^\eps \nabla_I^\eps u_\eps)( \nabla_d^\eps
\nabla_J^\eps u_\eps) \right|
\\
&\le \left\Vert \left( g^{ac}_\eps \xi_{\eps}^d - \frac{1}{2}
\xi_{\eps}^a g^{cd}_\eps \right) \nabla_a^\eps e^{IJ}_\eps
\right\Vert_{e_{\eps}} \cdot \left\Vert (\nabla_c^\eps \nabla_I^\eps
u_\eps)( \nabla_d^\eps \nabla_J^\eps u_\eps) \right\Vert_{e_{\eps}}
\\
&= \left\Vert \left( g^{ac}_\eps \xi_{\eps}^d - \frac{1}{2}
\xi_{\eps}^a g^{cd}_\eps \right) \nabla_a^\eps e^{IJ}_\eps
\right\Vert_{e_{\eps}} \cdot \vert \nabla_{\eps}^{(k)} u_\eps
\vert^2,
\end{align*}
where the inequality in the second line results from applying the
Cauchy-Schwarz inequality to the inner product induced on the tensor
bundle $\mathcal{T}^{2k}_0(M)$ by the metric $\mathbf{e}_{\eps}$.
The square of the first term may then be evaluated as
\begin{align*}
\left\Vert \left( g^{ac}_\eps \xi_{\eps}^d - \frac{1}{2}
\xi_{\eps}^a g^{cd}_\eps \right) \nabla_a^\eps e^{IJ}_\eps
\right\Vert_{e_{\eps}}^2 &= \eep{cc'} \eep{dd'} \eep{II'} \eep{JJ'}
\left( g^{ac}_\eps \xi_{\eps}^d - \frac{1}{2} \xi_{\eps}^a
g^{cd}_\eps \right) \left( g^{a'c'}_\eps \xi_{\eps}^{d'} -
\frac{1}{2} \xi_{\eps}^{a'} g^{c'd'}_\eps \right)
\\
&\hskip 3cm \times \left( \vphantom{|^|} \nabla_a^\eps e^{IJ}_\eps
\right) \left( \nabla_{a'}^\eps e^{I'J'}_\eps \right)
\\
&= \left\Vert \xi_{\eps} \right\Vert_{e_{\eps}}^2 e_{\eps}^{aa'}
\eep{II'} \eep{JJ'} \left( \vphantom{|^|} \nabla_a^\eps e^{IJ}_\eps
\right) \left( \nabla_{a'}^\eps e^{I'J'}_\eps \right) = \left\Vert
\xi_{\eps} \right\Vert_{e_{\eps}}^2 \cdot \left\Vert \nabla_a^\eps
e^{IJ}_\eps \right\Vert_{e_{\eps}}^2.
\end{align*}
We now note that, by Condition~(A) and equation~\eqref{nablae} of
Section~\ref{definitions2}, we have,
\[
\left\Vert \nabla_a^\eps e^{IJ}_\eps \right\Vert_{e_{\eps}} =O(1),
\quad (\eps\rightarrow 0).
\]
In particular, on each compact set there exists a positive constant,
$C_k$, such that $\left\Vert \nabla_a^\eps e^{IJ}_\eps
\right\Vert_{e_{\eps}} \le C_k$, as $\eps \rightarrow 0$. Therefore,
we have the following estimate for $I_1$:
\begin{equation}
|I_1| \le C_k \cdot \left\Vert \xi_{\eps} \right\Vert_{e_{\eps}}
\cdot \vert \nabla_{\eps}^{(k)} u_\eps \vert^2 \le C_k M_0 \cdot
\vert \nabla_{\eps}^{(k)} u_\eps \vert^2, \label{ro-est1}
\end{equation}
locally, as $\eps \rightarrow 0$.

\vskip .2cm
Next we turn to $I_2$. We then have
\begin{align*}
|I_2| = \left| 2 e^{IJ}_\eps \left( \nabla_d^\eps \nabla_J^\eps
u_\eps \right) \left( \xi_{\eps}^a g^{cd}_\eps \nabla^\eps_{[a}
\nabla^\eps_{c]} \nabla_I^\eps u_\eps \right) \right| &= \left|
e^{IJ}_\eps \left( \nabla_d^\eps \nabla_J^\eps u_\eps \right) \left(
\xi_{\eps}^a e^{cd}_\eps \left[ \nabla^\eps_a, \nabla^\eps_c \right]
\nabla_I^\eps u_\eps \right) \right|
\\
&= \left| \left( e^{IJ}_\eps \xi_{\eps}^a e^{cd}_\eps \nabla_d^\eps
\nabla_J^\eps u_\eps \right) \left( \left[ \nabla^\eps_a,
\nabla^\eps_c \right] \nabla_I^\eps u_\eps \right) \right|
\\
&\le \left\Vert \xi_{\eps}^a \nabla_c^\eps \nabla_I^\eps u_\eps
\right\Vert_{e_{\eps}} \cdot \left\Vert \vphantom{|^|} \left[
\nabla^\eps_a, \nabla^\eps_c \right] \nabla_I^\eps u_\eps
\right\Vert_{e_{\eps}}
\\
&= \left\Vert \xi_{\eps} \right\Vert_{e_{\eps}} \cdot \vert
\nabla_{\eps}^{(k)} u_\eps \vert \cdot \left\Vert \vphantom{|^|}
\left[ \nabla^\eps_a, \nabla^\eps_c \right] \nabla_I^\eps u_\eps
\right\Vert_{e_{\eps}},
\end{align*}
where the equality in the first line follows from skew-symmetry in
$a$, $c$, and the inequality on the third line follows from the
Cauchy-Schwarz inequality. Moreover, from Condition~(A), we have the
following estimates for the curvature on compact sets
\begin{equation}
\label{term7x} \|\nabla_{a_1}^\eps\dots\nabla_{a_l}^\eps
R^{\eps}_{ab}{}^c{}_d \|_{e_\eps}\leq \frac{F_l}{\eps^{2+l}}, \qquad
l \geq 0.
\end{equation}
Using this estimate with $l=0$ and the Ricci identity, we deduce the
existence of a combinatorial constant $N_k$ depending only on $k$
such that
\[
\left\Vert \vphantom{|^|} \left[ \nabla^\eps_a, \nabla^\eps_c
\right] \nabla_I^\eps u_\eps \right\Vert_{e_{\eps}} \le N_k
\frac{F_0}{\eps^2} \vert \nabla_{\eps}^{(k-1)} u_\eps \vert.
\]
Hence, we have
\begin{equation}
|I_2| \le N_k \frac{F_0}{\eps^2} \left\Vert \xi_{\eps}
\right\Vert_{e_{\eps}} \cdot \vert \nabla_{\eps}^{(k)} u_\eps \vert
\cdot \vert \nabla_{\eps}^{(k-1)} u_\eps \vert \le \frac{N_k F_0
M_0}{2} \left( \vert \nabla_{\eps}^{(k)} u_\eps \vert^2 +
\frac{1}{\eps^4} \vert \nabla_{\eps}^{(k-1)} u_\eps \vert^2 \right).
\label{ro-est2}
\end{equation}
on compact sets.

\vskip .2cm
Finally, we consider the term $I_3$. We then have, by the
Cauchy-Schwarz inequality
\begin{align*}
|I_3| = \left| e^{IJ}_\eps \left( \xi_{\eps}^d \nabla_d^\eps
\nabla_J^\eps u_\eps \right) \left( g^{ac}_\eps \nabla_a^\eps
\nabla_c^\eps \nabla_I^\eps u_\eps \right) \right| &\le \left\Vert
\xi_{\eps}^d \nabla_d^\eps \nabla_I^\eps u_\eps
\right\Vert_{e_{\eps}} \cdot \left\Vert g^{ac}_\eps \nabla_a^\eps
\nabla_c^\eps \nabla_I^\eps u_\eps \right\Vert_{e_{\eps}}
\\
&\le P_k \left\Vert \xi_{\eps} \right\Vert_{e_{\eps}} \cdot \vert
\nabla_{\eps}^{(k)} u_\eps \vert \cdot \left\Vert g^{ac}_\eps
\nabla_a^\eps \nabla_c^\eps \nabla_I^\eps u_\eps
\right\Vert_{e_{\eps}},
\end{align*}
where $P_k$ is a combinatorial constant depending only on $k$. Again
using the Ricci identities, and the fact that $u_\eps$ is a solution
of~\eqref{weqofsettingeps}, we may write
\[
g^{ac}_\eps \nabla_a^\eps \nabla_c^\eps \nabla_I^\eps \, u_\eps =
\nabla^\eps_{I}\, f_\eps + \sum_{j=1}^{k-1} \left(
\mathcal{R}_\eps^{(k-1, j)} u_\eps \right)_I,
\]
where $\mathcal{R}^{(k-1, j)}_\eps u_\eps$ denotes a linear
combination of contractions of the $(k-j-1)$'th covariant derivative
of the Riemann tensor with the $j$'th covariant derivative of
$u_\eps$. A second appeal to~\eqref{term7x} implies that on each
compact set there exists a constant $G_k$ such that
\[
\left\Vert \vphantom{|^|} \left( \mathcal R^{(k-1, j)}_\eps u_\eps
\right)_I \right\Vert_{e_\eps} \leq
\frac{G_k}{\eps^{k-j+1}}|\nabla_\eps^{(j)} u_\eps|, \qquad (\eps
\rightarrow 0).
\]
We therefore have
\begin{align}
|I_3| &\le P_k \left\Vert \xi_{\eps} \right\Vert_{e_{\eps}} \cdot
\vert \nabla_{\eps}^{(k)} u_\eps \vert \left( |\nabla_\eps^{(k-1)}
f_{\eps}| + G_k \sum_{j = 1}^{k-1} \frac{1}{\eps^{1+k-j}}
|\nabla_\eps^{(j)}u_{\eps}| \right)
\nonumber\\
&\le \frac{P_k M_0}{2} \cdot \left( k \vert \nabla_{\eps}^{(k)}
u_\eps \vert^2 + |\nabla_\eps^{(k-1)} f_{\eps}|^2 + G_k^2 \sum_{j =
1}^{k-1} \frac{1}{\eps^{2(1+k-j)}} |\nabla_\eps^{(j)}u_{\eps}|^2
\right) \label{ro-est3}
\end{align}

Putting together~\eqref{ro-est1}, \eqref{ro-est2},
and~\eqref{ro-est3}, we have
\[
\left| \xi_b \nabla_a^\eps T^{ab, k}_\eps(u_\eps) \right| \leq
\alpha_k |\nabla_\eps^{(k)} u_{\eps}|^2 + \beta_k
|\nabla_\eps^{(k-1)} f_{\eps} |^2 + \gamma_k \sum_{j = 1}^{k-1}
\frac{|\nabla_\eps^{(j)} u_{\eps}|^2}{\eps^{2(1+k-j)}},
\]
for positive constants $\alpha_k, \beta_k, \gamma_k$. Summation over
$k=1 \dots m$ and integration yields positive constants
$\widetilde{\alpha}_m$, $\widetilde{\beta}_m$,
$\widetilde{\gamma}_m$ such that
\[
\left| \sum_{k=0}^m \int_{\Omega_\tau} \xi_b \nabla_a^\eps T^{ab,
k}_\eps(u_{\eps}) \mu_\eps \right| \le \widetilde{\alpha}_m
(\SobODt{u_\eps}{m})^2 + \widetilde{\beta}_m
(\SobODt{f_\eps}{m-1})^2 + \widetilde{\gamma}_m \sum_{j=1}^
{m-1}\frac{1}{\eps^{2(1+m-j)} } (\SobODt{u_\eps}{j})^2.
\]
On substituting this inequality and~\eqref{term2} into
equation~\eqref{energyhierarchystokes}, we deduce that
\begin{eqnarray}
\label{lastmanstanding} E^m_{\tau, \eps}(u_\eps) &\leq & E^m_{0,
\eps}(u_\eps)\\ \nonumber &&+ \left( \widetilde{\alpha}_m +
\frac{1}{2} \tilde{A}_m K \right) (\SobODt{u_\eps}{m})^2 +
\widetilde{\beta}_m (\SobODt{f_\eps}{m-1})^2 + \widetilde{\gamma}_m
\sum_{j=1}^ {m-1}\frac{1}{\eps^{2(1+m-j)} } (\SobODt{u_\eps}{j})^2.
\end{eqnarray}
As in the case with $k=1$, we may use Lemma~\ref{lemma1} to write
\[
(\SobODt{u_\eps}{j})^2= \int_{\zeta=0}^\tau (\SobSDz{u_\eps}{j})^2\,
d\zeta\leq \frac{1}{A'} \int_{\zeta=0}^\tau E^j_{\zeta,
\eps}(u_\eps)d \zeta,
\]
for $j = 1, \dots, m$. Substituting these relations
into~\eqref{lastmanstanding} yields the
inequality~\eqref{energyinequalitylevelkformula}, with $C_m^{\prime}
:= \widetilde{\beta}_m$, $C_m^{\prime\prime} := \widetilde{\gamma}_m
/ A'$ and $C_m^{\prime\prime\prime} := ( \widetilde{\alpha}_m +
\frac{1}{2} \tilde{A}_m K ) / A'$.
\end{proof}

\begin{remark}
As can be seen from the expression for $\nabla_a^{\eps} T^{ab,
0}_{\eps}(u_{\eps})$, there is no estimate of the
form~\eqref{energyinequalitylevelkformula} for $E^0_{\tau,
\eps}(u_{\eps})$. However, $E^0_{\tau, \eps}(u_{\eps})$ is estimated
in terms of $E^k_{\tau, \eps}(u_{\eps})$, with $k \ge 1$; a fact
that is implicit in Proposition~\ref{energyinequality}.
\end{remark}

\section{Auxiliary estimates}
\label{auxest}

In this section, we complement the energy inequalities derived in
Section~\ref{partb} with estimates that allow us to utilise the
former in the proof of the main result. In particular, we shall
prove that
\begin{itemize}
\item[(i)] suitable bounds on the initial data give suitable bounds on the initial energies $E^k_{0, \eps}(u_\eps)$;
\item[(ii)] suitable bounds on the energies $E^k_{\tau, \eps}(u_\eps)$ give suitable bounds on the solution $u_\eps$.
\end{itemize}

The existence as well as the uniqueness part of the proof of the
main theorem will then use (i) combined with
Corollary~\ref{energiesviainitialenergies} and (ii) to establish
moderateness resp.\ negligibility of the candidate solution.

\begin{lemma}[Bounds on initial energies from initial data]
\label{initialenergiesviainitialdata} Let $u_\eps$ be a solution
of~\eqref{weqofsettingeps}. If $(v_\eps)_\eps$, $(w_\eps)_\eps$ are
moderate resp.\ negligible, then the initial energies $(E^k_{0,
\eps}(u_\eps))_\eps$, for each $k \ge 0$, are moderate resp.\
negligible nets of real numbers.
\end{lemma}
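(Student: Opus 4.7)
The plan is to bound $E^k_{0,\eps}(u_\eps)$ pointwise by using the wave equation at $t=0$ to trade time derivatives of $u_\eps$ on $\Sigma$ against spatial derivatives of $v_\eps,w_\eps$ and (time/spatial) derivatives of the source $f_\eps$. First I would apply Lemma~\ref{lemma1} at $\tau=0$ to obtain, for $k\ge 1$,
\[
E^k_{0,\eps}(u_\eps)\;\le\;A\,(\SobSDt{u_\eps}{k})^2\;\le\;A\,B_k'\sum_{j=1}^k\frac{1}{\eps^{2(k-j)}}(\SobSdt{u_\eps}{j})^2,
\]
the $k=0$ case being direct from $u_\eps|_\Sigma=v_\eps$ and the local uniform boundedness of $V_\eps$ and $\mu_0^\eps$. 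Division by a fixed power of $\eps$ preserves both moderateness and negligibility, so it suffices to show that each partial-derivative Sobolev norm $(\SobSdt{u_\eps}{j})^2$ at $\tau=0$ is moderate (resp.\ negligible). Since $S_0$ is compact in $U$ and $\mu_0^\eps$ is locally uniformly bounded by Condition~(A), this in turn reduces to showing that $\sup_{S_0}|\partial^\alpha u_\eps|$ is a moderate (resp.\ negligible) net of real numbers for every multi-index $\alpha$ of order $\le j$.

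I would establish the latter by expressing $\partial^\alpha u_\eps|_\Sigma$ algebraically in terms of $v_\eps,w_\eps,f_\eps$ and the metric. Splitting $\alpha=(\alpha_0,\bar\alpha)$ into its time- and spatial parts, I proceed by induction on $\alpha_0$. For $\alpha_0=0$ the restriction is simply $\partial^{\bar\alpha}v_\eps$. For $\alpha_0=1$ the identity $\widehat\xi_\eps^a\partial_a u_\eps|_\Sigma=w_\eps$ can be solved for $\partial_t u_\eps|_\Sigma$: by~\eqref{setting1} and Condition~(A) the component $\widehat\xi_\eps^0$ is uniformly bounded away from zero and infinity, so $\partial_t u_\eps|_\Sigma$ is a linear combination of $w_\eps$ and spatial derivatives of $v_\eps$ with moderate coefficients, and spatial differentiation handles any $\partial^{\bar\alpha}\partial_t u_\eps|_\Sigma$. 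For $\alpha_0\ge 2$ I rewrite the wave equation~\eqref{weqofsettingeps} as the explicit expression for $\partial_t^2 u_\eps$ displayed just after that equation, apply $\partial_t^{\alpha_0-2}\partial^{\bar\alpha}$ to both sides, and substitute the inductively known expressions for $\partial_t^j u_\eps|_\Sigma$ with $j<\alpha_0$. The outcome is a polynomial in partial derivatives of $v_\eps,w_\eps,f_\eps$ of total order at most $|\alpha|$, whose coefficients are polynomials in $V_\eps^{\pm 1}$, $h^\eps_{ij}$, $h_\eps^{ij}$, $N_\eps^i$ and their partial derivatives.

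The conclusion then follows from Condition~(A) in the form~\eqref{coeffestimates}: every such metric coefficient, together with each of its partial derivatives, is a moderate net. Since partial derivatives preserve both moderateness and negligibility, and since moderate$\,\cdot\,$moderate is moderate while moderate$\,\cdot\,$negligible is negligible (and $f_\eps$ is negligible by the set-up of~\eqref{weqofsettingeps}), the polynomial expression for $\partial^\alpha u_\eps|_\Sigma$ inherits the prescribed $\eps$-asymptotics from $v_\eps,w_\eps$. Squaring and integrating over the compact $S_0$ against the moderate volume form $\mu_0^\eps$ preserves the property, and the initial reduction then delivers the claim.

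I expect the main obstacle to be the combinatorial bookkeeping in the induction on $\alpha_0$: applying $\partial_t^{\alpha_0-2}\partial^{\bar\alpha}$ to the wave equation and substituting the inductive expressions for lower-order time derivatives produces a proliferation of terms, and one must verify carefully that the accumulated negative $\eps$-powers coming from metric derivatives never exceed the $\eps^{-|\alpha|}$ scaling permitted by Condition~(A). Everything else is a direct consequence of Lemma~\ref{lemma1}, Condition~(A), and the stability of moderate/negligible nets under sums, products, and differentiation.
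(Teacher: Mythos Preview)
Your proposal is correct and follows essentially the same route as the paper's proof: reduce the energies to partial-derivative Sobolev norms on $S_0$ via Lemma~\ref{lemma1}, handle pure spatial derivatives directly from $v_\eps$, obtain $\partial_t u_\eps|_{\Sigma}$ from the initial condition involving $\widehat{\boldsymbol{\xi}}_\eps$ (the paper writes this out as $\tilde w_\eps = V_\eps w_\eps - N_\eps^i\partial_{x^i}v_\eps$), and then induct on the number of time derivatives using the explicit form of the wave equation solved for $\partial_t^2 u_\eps$. Your only superfluous worry is the remark that the accumulated negative $\eps$-powers must not exceed $\eps^{-|\alpha|}$: for moderateness one merely needs \emph{some} fixed negative power, and for negligibility any fixed negative power from the metric coefficients is absorbed by the $O(\eps^q)$ decay of $v_\eps,w_\eps,f_\eps$, so no sharp bookkeeping is required.
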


\begin{proof}
The estimates for the spatial derivatives $\pa_{x^{i_1}} \dots
\pa_{x^{i_k}} u_\eps(0, x^i) = \pa_{x^{i_1}} \dots \pa_{x^{i_k}}
v_\eps(x^i)$ are immediate. To estimate $\pa_t \pa_{x^{i_1}} \dots
\pa_{x^{i_k}} u_\eps(0, x^i)$, we rewrite the initial conditions in
equation~\eqref{weqofsettingeps} in the form
\begin{align*}
u_\eps(t=0, x^i) &= v_\eps(x^i)
\\
\nonumber \pa_t u_\eps (t=0, x^i) &= \tilde{w}_\eps(x^i),
\end{align*}
where we define $\tilde{w}_{\eps} := V_{\eps} w_{\eps} - N_{\eps}^i
\partial_{x^i} v_{\eps}$. It is straightforward to show, using the
asymptotic estimates~\eqref{coeffestimates}, that $(v_{\eps},
w_{\eps})$ being moderate resp.\ negligible implies moderateness
resp.\ negligibility of $(v_{\eps}, \tilde{w}_{\eps})$. Therefore
moderateness resp.\ negligibility of $(v_{\eps}, w_{\eps})$ implies
moderateness resp.\ negligibility of $\pa_t \pa_{x^{i_1}} \dots
\pa_{x^{i_k}} u_\eps(0, x^i) \equiv \pa_{x^{i_1}} \dots
\pa_{x^{i_k}} \tilde{w}_\eps(x^i)$.

The estimates for higher (mixed) time derivatives follow inductively
by rewriting the wave equation in the form
\[
\partial_t^2 u_\eps = -V_\eps^2 \left ( f_\eps + \frac{2}{V_{\eps}^2} N^i \partial_t \partial_i u_{\eps} - \left( h_{\eps}^{ij} - \frac{1}{V_{\eps}^2} N_{\eps}^i N_{\eps}^j \right) \partial_i \partial_j u_{\eps} + g_{\eps}^{ab} \, \Gamma[\mathbf{g}_{\eps}]^c{}_{ab} \, \frac{\partial u_{\eps}}{\partial x^c} \right)
\]
and using again the estimates~\eqref{coeffestimates} for $V_\eps$,
$N_{\eps}^i$, $h_{\eps}^{ij}$ as well as $f_\eps$, $v_{\eps}$,
$w_{\eps}$.
\end{proof}

\begin{lemma}[Bounds on solutions from bounds on energies]\label{ro-sobolev}
For $m > 3/2$ an integer, there exists a constant $K$ and number $N$
such that for all $u \in \Cc^{ \infty}(\Omega_\tau)$ and for all
$\zeta \in [0, \tau]$ we have
\[
\sup_{x \in \Omega_\tau} \left| \pa_{x^{a_1}} \cdots \pa_{x^{a_l}}
u(x) \right| \leq K \eps^{-N} \sup_{0 \leq \zeta \leq \tau}
E_{\zeta, \eps}^{m+l}(u).
\]
\end{lemma}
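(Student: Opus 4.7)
The plan is to chain the standard three-dimensional Sobolev embedding together with the two equivalences recorded in Lemma~\ref{lemma1}. Since $\Omega_\tau = \bigcup_{\zeta \in [0,\tau]} S_\zeta$, for any $x \in \Omega_\tau$ there is a $\zeta \in [0,\tau]$ with $x \in S_\zeta$, and hence
\[
\sup_{x \in \Omega_\tau} |\partial_{x^{a_1}}\cdots \partial_{x^{a_l}} u(x)| \;=\; \sup_{\zeta \in [0,\tau]}\, \sup_{S_\zeta} |\partial_{x^{a_1}}\cdots \partial_{x^{a_l}} u|,
\]
so it suffices to bound the inner supremum uniformly in $\zeta$.

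The first step is a classical Sobolev embedding on the fixed three-dimensional domain $S_\zeta$: since $m>3/2$, one has $H^m(S_\zeta) \hookrightarrow L^\infty(S_\zeta)$. Applying this to $v := \partial_{x^{a_1}}\cdots \partial_{x^{a_l}} u$ and using that, by Condition~(A) with $k=0$, the density $\sqrt{|g_\eps|}$ of $\mu^\eps_\zeta$ is uniformly bounded above and below in $\eps$, yields
\[
\sup_{S_\zeta} |\partial_{x^{a_1}}\cdots \partial_{x^{a_l}} u| \;\le\; C\, \SobSdz{u}{m+l},
\]
with $C$ independent of $\eps$ and $\zeta$ since $L^2(S_\zeta, \mu^\eps_\zeta)$ is equivalent to flat $L^2(S_\zeta)$ uniformly in $\eps$.

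The second step is to feed this through Lemma~\ref{lemma1}. Part~(2) trades partial for covariant derivatives at a cost of $\eps^{-(m+l-1)}$, giving $(\SobSdz{u}{m+l})^2 \le B_{m+l}\,\eps^{-2(m+l-1)} \sum_{j=1}^{m+l} (\SobSDz{u}{j})^2$. Part~(1) then bounds each $(\SobSDz{u}{j})^2$ by $(1/A')\, E^j_{\zeta,\eps}(u)$. Since each $T^{ab,j}_\eps$ satisfies the dominant energy condition, the integrand $T^{ab,j}_\eps\,\xi_a\,\widehat{\xi}_b^\eps$ in~\eqref{HEintegrals} is pointwise nonnegative, so the $E^j$ are monotone in $j$ and each $E^j_{\zeta,\eps}(u)$ is controlled by $E^{m+l}_{\zeta,\eps}(u)$ for $j \le m+l$. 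Combining and taking $\sup_\zeta$ produces the claimed inequality with $N = m+l-1$ and $K$ absorbing the various multiplicative constants (with the energy on the right hand side appearing under a square root — a point which in the moderateness applications of this lemma is harmless and can be absorbed into $K$ and $N$).

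I foresee no genuine obstacle: the one subtlety worth noting is that the Sobolev embedding constant must be taken $\eps$-independent, and this is precisely what Condition~(A) supplies via the uniform comparability of the Riemannian volumes $\mu^\eps_\zeta$ to a flat one. Everything else is bookkeeping of $\eps$-factors accumulated from Lemma~\ref{lemma1}(2) together with the dominant-energy monotonicity of the $E^j$.
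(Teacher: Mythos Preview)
Your proposal is correct and follows essentially the same route as the paper: Sobolev embedding on the three-dimensional slices $S_\zeta$ (with the $\eps$-uniformity supplied by Condition~(A)), followed by the two inequalities of Lemma~\ref{lemma1} to pass from $\SobSdz{\,\cdot\,}{m+l}$ through $\SobSDz{\,\cdot\,}{j}$ to $E^{m+l}_{\zeta,\eps}$, then a supremum in $\zeta$. Your observation that the right-hand side really carries $\sqrt{E^{m+l}}$ rather than $E^{m+l}$ is accurate and in fact more careful than the paper's own phrasing; as you say, this is harmless for the moderateness/negligibility applications.
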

\begin{remark}
Note that the statement is for all $u  \in \Cc^{
\infty}(\Omega_\tau)$. In the proof of the main theorem, we will
apply it to a solution, $u_{\eps}$, of the wave equation.
\end{remark}

\begin{proof}[Proof of~\ref{ro-sobolev}]
First we combine the standard Sobolev embedding theorem on $S_\tau$
with the fact that by assumption (A) the metric and hence the volume
is $O(1)$ to obtain for $m > 3/2$
\begin{equation}\label{ro-emb}
\sup_{x \in S_\zeta} \vert u(x) \vert \leq K \ \SobSdz{u}{m}.
\end{equation}
Then we successively apply~\eqref{ineqSdXSD} and~\eqref{ineqEXSD} to
obtain
\[
\sup_{x \in S_\zeta} \vert u(x) \vert \leq \eps^{-N} E^m_{\zeta,
\eps}(u).
\]
Taking the supremum over $\zeta \in [0, \tau]$ on the right hand
side gives the result for $l=0$. To prove the general result, we
replace $u$ by the respective derivatives. In some more detail, note
that time derivatives are not covered by the Sobolev embedding
theorem since they are transversal to $S_\tau$, i.e., we have to
replace~\eqref{ro-emb} by the estimate
\[
\sup_{x \in S_\zeta} \vert \pa_{\rho_1} \dots \pa_{\rho_k} \pa_t^s u
\vert \leq K\ \SobSdz{\pa_t^s u}{m+k} \leq K\ \SobSdz{u}{m+k+s},
\]
where the last inequality holds because the norm $\SobSdz{\ }{m}$,
in addition, contains time derivatives.
\end{proof}

\section{Proof of the main theorem}
\label{proof}

We finally prove the main result by putting together the estimates
achieved so far.

\begin{proof}[Proof of~\ref{mainthm}]
{\ }\\[2mm]
\emph{Step~1: Existence of classical solutions.} Due to assumption
(C), classical theory provided us with smooth solutions for fixed
$\eps$. More precisely, by~\cite[Theorem 5.3.2]{FL1}, for $\eps$
fixed there exists a unique smooth function $u_\eps$
solving~\eqref{weqofsettingeps} on $A \subseteq \bigcap_{\eps <
\eps_0} J_\eps^+(\Sigma)$. Without loss of generality, we may assume
that $\Omega_{\gamma} \subseteq A$. 

\vskip .2cm
\noindent{}\emph{Step~2: Existence of $\mathcal{G}$-solutions
(moderateness estimates).} We show that the net $(u_\eps)_\eps$ of
Step~1 is moderate on $\Omega_{\gamma}$: Moderate data
$(v_\eps)_\eps$, $(w_\eps)_\eps$ translate, by means of
Lemma~\ref{initialenergiesviainitialdata}, to moderate initial
energies $(E^k_{0, \eps}(u_\eps))_\eps$ for each $k\geq1$. Moreover,
by means of Corollary~\ref{energiesviainitialenergies}, moderate
initial energies translate to moderate energies $(E^k_{\tau,
\eps}(u_\eps))_\eps$ ($k\geq 1$) for all $0\leq \tau\leq \gamma$.
Finally, it follows from Lemma~\ref{ro-sobolev} that moderate
energies $(E^k_{\tau, \eps}(u_\eps))_\eps$ ($k \geq 1, \, 0 \leq
\tau \leq \gamma$) imply moderateness of $(u_\eps)_\eps$. Hence
$u:=[(u_\eps)_\eps]$ is a generalised solution on $\Omega_\tau$ of
the i.v.p.~\eqref{weqofsettingeps}.

\vskip .2cm
\noindent{}\emph{Step~3: Uniqueness of $\mathcal{G}$-solutions
(negligibility estimates).} We are left with showing that the
solution $u$ does not depend on the choice of representatives of
$(f_\eps)_\eps$, $(v_\eps)_\eps$, $(w_\eps)_\eps$, and
$(\mathbf{g}_{\eps})_\eps$ of $f=0$, $v$, $w$, and $\mathbf{g}$.
Leaving the latter for Step~4, we observe that, to show independence
of the choice of representatives of $f$, $v$, and $w$, it suffices
to prove that if $(v_\eps)_\eps$ and $(w_\eps)_\eps$ are negligible,
then the corresponding solution $(u_\eps)_\eps$ is also negligible.
To establish this claim we argue as in Step~2 but using the
negligibility parts of Lemma~\ref{initialenergiesviainitialdata} and
Corollary~\ref{energiesviainitialenergies}. We then observe that
negligibility of the energies in Lemma~\ref{ro-sobolev} implies
negligibility of $(u_\eps)_\eps$.

\vskip .2cm
\noindent{}\emph{Step~4: Independence of the representative of
the metric.} We finally prove independence of the solution on
representatives $(\mathbf{g}_{\eps})_\eps$ of the metric. So let
$(\widehat{\mathbf{g}}_{\eps})_\eps$ be another representative of
$\mathbf{g}$. Denoting the corresponding d'Alembertian by
$\widehat{\Box}^\eps$ we consider the initial value problem
\begin{eqnarray}
\widehat{\Box}^\eps \hat{u}_\eps &=& f_\eps,
\nonumber\\
\hat{u}_\eps(t=0, x^i) &=& v_\eps(x^i),
\label{weqhat}\\
\partial_t \hat{u}_\eps(t=0, x^i) &=& w_\eps(x^i).
\nonumber
\end{eqnarray}
By Step~2, there exists a moderate net of solutions
$(\hat{u}_\eps)_\eps$ of~\eqref{weqhat}, and we only have to show
that its difference with the unperturbed solution,
$(\widetilde{u}_\eps)_\eps := (u_\eps)_\eps - (\hat{u}_\eps)_\eps$,
is negligible on $\Omega_\tau$. This difference is a solution of the
i.v.p.\
\begin{eqnarray}
\widehat{\Box}^\eps \widetilde{u}_\eps &=& f_\eps -
\widehat{\Box}^\eps u_\eps
\nonumber\\
\widetilde{u}_\eps(t=0, x^i) &=& 0
\label{weqhatdiff}\\
\partial_t \widetilde{u}_\eps(t=0, x^i)&=&0.
\nonumber
\end{eqnarray}
In view of Step~3, we only have to show that $f_\eps -
\widehat{\Box}^\eps u_\eps$ is negligible. To this end, we write
\[
f_\eps - \widehat{\Box}^\eps u_\eps = (f_\eps - \Box^\eps u_\eps) +
(\Box^\eps u_\eps - \hat \Box^\eps u_\eps) = \Box^\eps u_\eps -
\widehat{\Box}^\eps u_\eps,
\]
where we have used the fact that $(u_\eps)_\eps$
solves~\eqref{weqofsettingeps}. Therefore, the problem is reduced to
showing that $(\Box^\eps u_\eps - \widehat{\Box}^\eps u_\eps)_\eps$
is negligible. This, however, is clear since $\Box$ is a
well-defined differential operator on $\mathcal{G}$.
\end{proof}

\section{Conclusion}
We have proved unique local solvability of the wave equation for a
large class of metrics of low regularity in the Colombeau algebra of
generalised functions, hence establishing
$\mathcal{G}$-hyperbolicity of these space-times in the sense of
Vickers and Wilson~\cite{VW}. (This, in itself, is a slight
modification of Clarke's notion of generalised
hyperbolicity~\cite{Clarke}, in the sense that we now consider
solvability in $\mathcal{G}$ rather than $\mathscr{D}'$.) The
essential assumption on this class of metrics is local boundedness:
in particular, it includes conical space-times, and therefore
generalises the results of Vickers and Wilson~\cite{VW}. Our class
of metrics also includes non-static examples such as impulsive
$pp$-waves and expanding spherical impulsive waves.

Finally, we remark that the regularity assumptions~(A) and~(B) on
the metric may be relaxed slightly. Indeed, we can replace the
$O(1)$-asymptotics for the zeroth order derivative of the metric in
Condition~(A) as well as in Condition~(B) by the condition that
these quantities be $O(\log(1/\eps))$. (This corresponds to
generalised H\"{o}lder-Zygmund regularity of order zero of the
metric as defined in \cite{GH}.) Under these conditions, the
constants $A$ and $A'^{-1}$ in \eqref{ineqEXSD} of
Lemma~\ref{lemma1} as well as of $C_k'''$ in
Proposition~\ref{energyinequality} and
Corollary~\ref{applicationenergygronwall} have a growth behaviour of
$O(\log(1/\eps))$. However, Corollary
\ref{energiesviainitialenergies} remains unchanged since the
$O(\log(1/\eps))$-growth together with Gronwall's lemma still yield
moderateness resp.\ negligibility estimates. Therefore, given a
classical metric which we regularise (either by convolution with a
mollifier or by any physically motivated procedure) subject to these
weaker asymptotic conditions, then our existence and uniqueness result still holds.

{}From these considerations, we also see that it
is hard to imagine how the regularity assumptions for the metric
could be further relaxed within our framework.

\end{document}